\DeclareMathAlphabet{\can}{OT1}{cmss}{m}{n}
\newtheorem{thm}{Theorem}[section]
\newtheorem{cor}[thm]{Corollary}
\newtheorem{lem}[thm]{Lemma}
\newtheorem{rem}[thm]{Remark}
\newtheorem{exa}[thm]{Example}
\theoremstyle{definition}
\theoremstyle{fact}
\theoremstyle{conjecture}
\numberwithin{equation}{section}
\newcommand{\ord}{\operatorname{ord}}
\newcommand{\Tr}{\operatorname{Tr}}
\begin{document}
\title[Optimal three-weight cyclic code] {Several  classes of  cyclic codes with either optimal three weights  or  a few weights}

\author[Z. Heng]{Ziling Heng}
\address{\rm Department of Mathematics, Nanjing University of Aeronautics and Astronautics,
Nanjing, 211100, P. R. China} \email{zilingheng@163.com}
\author[Q. Yue] {Qin Yue}
\address{\rm Department of Mathematics, Nanjing University of Aeronautics and Astronautics,
Nanjing, 211100, P. R. China} \email{yueqin@nuaa.edu.cn}

\thanks{The paper is supported by NNSF of China (No. 11171150); Fundamental Research Funds for the Central Universities (NO. NZ2015102); Funding of Jiangsu Innovation Program for Graduate Education (the Fundamental Research Funds for the Central Universities; No. KYZZ15\underline{ }0086)}

\subjclass[2000]{11T71, 11T55, 12E20}
 \keywords{cyclic codes, Griesmer bound, weight distribution, Gauss sums}
\begin{abstract}
Cyclic codes with a few weights are very useful in the design of frequency hopping sequences and the development of secret sharing schemes. In this paper, we mainly use Gauss sums  to  represent  the Hamming weights of a general construction of cyclic codes.  As applications, we obtain a class of optimal three-weight codes achieving the Griesmer bound, which  generalizes a Vega's result in \cite{V1}, and several classes of cyclic codes with only a few weights,  which solve the open problem in  \cite{V1}.
 \end{abstract}
\maketitle

\section{Introduction}
Let $\Bbb F_{q}$ be a finite field with $q$ elements, where $q$ is a power of a prime. An $[n,l,h]$ linear  code over $\Bbb F_{q}$ is an $l$-dimensional subspace of $\Bbb F_{q}^{n}$ with minimum Hamming distance $h$. We call an $[n,l]$ linear code $\mathcal{C}$ \emph{cyclic} if $\textbf{c}=(c_{0},c_{1},\cdots,c_{n-1})\in \mathcal{C}$ implies that $(c_{n-1},c_{0},\cdots,c_{n-2})\in \mathcal{C}$. By identifying a vector $\textbf{c}$ of $\Bbb F_{q}^{n}$ with
$$c_{0}+c_{1}x+\cdots+c_{n-1}x^{n-1}\in \Bbb F_{q}[x]/(x^{n}-1),$$
a code of length $n$ corresponds to a subset of $\Bbb F_{q}[x]/(x^{n}-1)$. It is easy to deduce that a linear code $\mathcal{C}$ is cyclic if and only if it is an ideal of the ring $\Bbb F_{q}[x]/(x^{n}-1)$. Then there exists a monic polynomial $g(x)$ of the least degree such that $\mathcal{C}=\langle g(x)\rangle$ and $g(x)|(x^{n}-1)$. Hence $g(x)$ is called the generator polynomial of $\mathcal C$ and  the polynomial $h(x)=(x^{n}-1)/g(x)$ is called the parity-check polynomial of $\mathcal{C}$.

Let $A_{i}$ denote the number of codewords with Hamming weight $i$ in a linear code $\mathcal{C}$ of length $n$. The weight enumerator of $\mathcal{C}$ is defined by
$$1+A_{1}z+\cdots+A_{n}z^{n}.$$
The sequence $(1,A_{1},\cdots,A_{n})$ is called the weight distribution of $\mathcal{C}$. Weight distribution is an important topic due to its application to estimate the error correcting capability and the error probability of error detection of a code. And it was investigated in many papers \cite{D, D1, D2, HY, LY, LYL, LF, V1, X, X1, X2, Y, Z}.

Determining the weight distributions of cyclic codes is, in general, very difficult. And cyclic codes with a few weights have many important applications in coding theory and cryptography. In the past years, cyclic codes with two or three weights were studied in \cite{D1, D2, F, LL, LY, LYL, V2, ZD}. However, most of these researches focused on cyclic codes over a prime field.

Let $d, k$ be positive integers. Let $\Bbb F_{q^k}$ be an extension of a finite field $\Bbb F_q$, $\gamma$  a primitive element of  $\Bbb F_{q^{k}}$  and  $h_{a}(x)\in \Bbb F_{q}[x]$   the minimal polynomial of $\gamma^{-a}$ for a positive integer $a$. In this paper, we always assume that  $e_1$ and $e_2$ are positive integers with   $\gcd(\frac{q^{k}-1}{q-1},e_{2})=1$,  $\gcd(q-1,ke_{1}-e_{2})=d$, and $\gcd(q-1,e_{1},e_{2})=1$.
 Then $\deg(h_{\frac{(q^k-1) e_1}{q-1}}(x))=1$ and  $\deg (h_{e_2}(x))=k$ by $\gcd(\frac{q^{k}-1}{q-1},e_{2})=1$. Moreover, we can get that $\gcd(k,d)=1$. We define a cyclic code
\begin{equation} \label{CC} \mathcal{C}_{((\frac{q^{k}-1}{q-1})e_{1},e_{2})}=\{c(a,b) : a \in \Bbb F_{q}, b \in \Bbb F_{q^{k}}\},\end{equation}
where
$$c(a,b)=(a\gamma^{\frac{(q^{k}-1)e_1i}{q-1}}+\Tr_{q^{k}/q}(b\gamma^{e_{2}i}))_{i=0}^{n-1}.$$
Since $\gcd(\frac{q^{k}-1}{q-1},e_{2})=1$ and $\delta_{1}:=\gcd(q^{k}-1,\frac{(q^{k}-1)e_1}{q-1},e_{2})=\gcd(q-1,e_{1},e_{2})=1$, its length is equal to
$$n=\frac{q^{k}-1}{\delta_{1}}=q^{k}-1.$$
It follows from Delsarte's Theorem \cite{D} that the code $\mathcal{C}_{((\frac{q^{k}-1}{q-1})e_{1},e_{2})}$ is a $[q^k-1, k+1]$ cyclic code
over $\Bbb F_q$ with the parity-check polynomial
$$h(x)=h_{\frac{(q^{k}-1)e_1}{q-1}}(x)h_{e_{2}}(x).  $$

This construction approach is generic in the sense that some known codes were given by it. We describe the known results as follows.

(1) For $k=2,d=1$, even $q$, $e_{1}=1$ and $e_{2}=q-1$, a class of three-weight binary cyclic codes $\mathcal{C}_{(q+1,q-1)}$ was investigated by C. Li, Q. Yue, \emph{et al.} in \cite{LYL}.

(2) For $k=2,d=1$, a class of optimal three-weight cyclic codes over any field was presented by G. Vega in \cite{V1}.  And G. Vega \cite{V1} presented an open problem to determine the weight distribution for $k=2$ and $d>1$.

In this paper, we mainly use Gauss sums to represent the weights of the cyclic code $\mathcal{C}_{((\frac{q^{k}-1}{q-1})e_{1},e_{2})}$ over any field $\Bbb F_{q}$. A lower bound of the minimum distance of $\mathcal{C}_{((\frac{q^{k}-1}{q-1})e_{1},e_{2})}$ is given. And we explicitly determine the weight distribution of the cyclic code $\mathcal{C}_{((\frac{q^{k}-1}{q-1})e_{1},e_{2})}$ in the following four cases.

(1) If  $d=1$,  it is an optimal three-weight cyclic code with respect to the Griesmer bound, which generalizes the Vega's result in \cite{V1} from $2$ to any positive integer $k$.

(2) If $d=2$,    it has four nonzero weights.

(3) If $d=3$, it has no more than five nonzero weights. In some special cases, it is four-weight.

(4) If $d=4$, it has no more than six nonzero weights.  In some special cases, it is  four-weight.

In fact,  we solve the open problem proposed by G. Vega \cite{V1} for $d=2,3,4$ with any $k$.

This paper is organized as follows. In Section 2, we introduce some results about Gauss sums, Jacobi sums, and cyclotomic classes. In Section 3, we use Gauss sums to represent the weights of $\mathcal{C}_{((\frac{q^{k}-1}{q-1})e_{1},e_{2})}$. In Section 4, we determine the weight distributions of the codes for $d=1,2,3,4$. In Section 5, we conclude this paper.

For convenience, we introduce the following notations in this paper:

\begin{tabular}{ll}
$q=p^e$ & $p$ a prime,\\
$\Bbb F_{q^{k}}$ & finite field with $q^{k}$ elements and $k$ a positive integer,\\
$\gamma$ & primitive element of $\Bbb F_{q^{k}}$,\\
$\delta$ & primitive element of $\Bbb F_{q}$,\\
$\chi$ & canonical additive character of $\Bbb F_{q}$,\\
$\chi'$ & canonical additive character of $\Bbb F_{q^{k}}$,\\
$\psi$ & multiplicative character of $\Bbb F_{q}$,\\
$\psi'$ & multiplicative character of $\Bbb F_{q^{k}}$,\\
$\varphi$ & multiplicative character of order $d$ of $\Bbb F_{q}$,\\
$\eta$ & quadratic multiplicative character of $\Bbb F_{q}$,\\
$\Tr_{q^{k}/q}$ & trace function from $\Bbb F_{q^{k}}$ to $\Bbb F_{q}$,\\
$\omega$ & primitive 3-th root of complex unity $\frac{-1+\sqrt{-3}}{2}$,\\
$i$ & primitive 4-th root of complex unity $\sqrt{-1}$\\
$Re(x)$ & real part of a complex number $x$.
\end{tabular}
\section{Preliminaries}
\subsection{Gauss sums}
Let $\Bbb F_{q}$ be a finite field with $q$ elements, where $q$ is a power of a prime $p$. The canonical additive character of $\Bbb F_{q}$ is defined as follows:
$$\chi: \Bbb F_{q}\longrightarrow \Bbb C^{*}, \chi(x)=\zeta_{p}^{\Tr_{q/p}(x)},$$
where $\zeta_{p}$ denotes the $p$-th primitive root of unity and $\Tr_{q/p}$ is the trace function from $\Bbb F_{q}$ to $\Bbb F_{p}$. The orthogonal property of additive characters \cite{L} is given by:
$$ \sum_{x\in \Bbb F_{q}}\chi(ax)=\left\{
\begin{array}{ll}
  q,   &      \mbox{if}\ a=0,\\
0 & \mbox{otherwise}.
\end{array} \right. $$
Let $\psi: \Bbb F_{q}^{*}\longrightarrow \Bbb C^{*}$ be a multiplicative character of $\Bbb F_{q}^{*}$. The trivial multiplicative character $\chi_{0}$ is defined by $\psi_{0}(x)=1$ for all $x\in \Bbb F_{q}^{*}$. For two multiplicative characters $\psi,\lambda$ of $\Bbb F_{q}^{*}$, we can define the multiplication by setting $\lambda \psi(x)=\lambda(x)\psi(x)$ for all $x\in \Bbb F_{q}^{*}$. Let $\bar\psi$ be the conjugate character of $\psi$ defined by $\bar\psi(x)=\overline{\psi(x)},$ where $\overline{\psi(x)}$ denotes the complex conjugate of $\psi(x)$. It is easy to deduce that $\psi^{-1}=\bar \psi$. It is known \cite{L} that all the multiplicative characters form a multiplication group $\widehat{\Bbb F}_{q}^{*}$ which is isomorphic to $\Bbb F_{q}^{*}$. The orthogonal property of multiplicative characters \cite{L} is given by:
$$ \sum_{x\in \Bbb F_{q}^{*}}\psi(x)=\left\{
\begin{array}{ll}
  q-1,   &      \mbox{if}\ \psi=\psi_{0},\\
0 & \mbox{otherwise}.
\end{array} \right. $$

The \emph{Gauss sum} over $\Bbb F_{q}$ is defined by
$$G(\psi,\chi)=\sum_{x\in \Bbb F_{q}^{*}}\psi(x)\chi(x).$$
It is easy to see that $G(\psi_{0},\chi)=-1$ and $G(\bar\psi,\chi)=\psi(-1)\overline{G(\psi,\chi)}$. Gauss sum is an important tool in this paper to compute exponential sums. In general, the explicit determination of Gauss sums is a difficult problem. In some cases, Gauss sums are explicitly determined in \cite{D3, Y}.

Let $(\frac{\cdot}{p})$ denote the Legendre symbol. The well-known quadratic Gauss sums are  given in the following.
\begin{lem}\cite{L}
Suppose that $q=p^{e}$ and $\eta$ is the quadratic multiplicative character of $\Bbb F_{q}$, where $p$ is an odd prime. Then  $$G(\eta,\chi)=(-1)^{e-1}\sqrt{(p^{*})^{e}}=\left\{
\begin{array}{ll}
(-1)^{e-1}\sqrt{q}, & \mbox{if}\ p\equiv 1\pmod{4},\\
(-1)^{e-1}(\sqrt{-1})^{e}\sqrt{q}, & \mbox{if}\ p\equiv 3\pmod{4},\\
\end{array}
\right.$$
where $p^{*}=(\frac{-1}{p})p=(-1)^{\frac{p-1}{2}}p$.
\end{lem}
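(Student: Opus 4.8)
The plan is to split the evaluation into two standard ingredients: the classical quadratic Gauss sum over the prime field $\mathbb{F}_p$, and the Davenport--Hasse lifting relation that passes from $\mathbb{F}_p$ to the extension $\mathbb{F}_q=\mathbb{F}_{p^e}$. Writing $\eta_p=\left(\frac{\cdot}{p}\right)$ for the quadratic character (Legendre symbol) of $\mathbb{F}_p$ and $\chi_p(x)=\zeta_p^{x}$ for the canonical additive character of $\mathbb{F}_p$, the base case amounts to evaluating the classical sum $g=G(\eta_p,\chi_p)=\sum_{x\in\mathbb{F}_p^{*}}\left(\frac{x}{p}\right)\zeta_p^{x}$, and the lift will then recover $G(\eta,\chi)$ for general $e$.

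For the base case I would first determine $g$ up to sign. Since $\eta_p$ is real, the identity $G(\bar\psi,\chi)=\psi(-1)\overline{G(\psi,\chi)}$ recorded above gives $\bar g=\eta_p(-1)g$, and combining this with the standard fact $|g|^2=p$ (valid for any nontrivial multiplicative character) yields
$$g^{2}=\eta_p(-1)\,|g|^{2}=\left(\tfrac{-1}{p}\right)p=(-1)^{\frac{p-1}{2}}p=p^{*}.$$
Hence $g=\pm\sqrt{p^{*}}$, and the entire content of the base case is the determination of this sign. This is exactly Gauss's celebrated sign theorem, which asserts that the plus sign always occurs, i.e. $g=\sqrt{p^{*}}$ with the principal square root, so that $g=\sqrt p$ when $p\equiv 1\pmod 4$ and $g=\sqrt{-1}\,\sqrt p$ when $p\equiv 3\pmod 4$. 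I expect this sign determination to be the main obstacle: the value $g^{2}=p^{*}$ is elementary, but pinning down the actual sign requires a genuinely nontrivial argument (for instance, via the evaluation of a suitable product $\prod(\zeta_p^{k}-\zeta_p^{-k})$, or via the eigenvalue analysis of the finite Fourier transform).

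Finally I would lift to $\mathbb{F}_q$. The quadratic character $\eta$ of $\mathbb{F}_q$ is the composite $\eta=\eta_p\circ N$ with $N=N_{\mathbb{F}_q/\mathbb{F}_p}$ the norm map: indeed $\eta_p\circ N$ has order dividing $2$ and is nontrivial because $N$ is surjective onto $\mathbb{F}_p^{*}$, so it coincides with the unique quadratic character of $\mathbb{F}_q$. Since $\chi$ is the canonical additive character of $\mathbb{F}_q$, which is precisely the lift $\chi_p\circ\Tr_{q/p}$, the Davenport--Hasse relation applies and yields
$$G(\eta,\chi)=(-1)^{e-1}\,G(\eta_p,\chi_p)^{e}=(-1)^{e-1}\,(\sqrt{p^{*}})^{e}=(-1)^{e-1}\sqrt{(p^{*})^{e}}.$$
The two displayed closed forms then follow by substituting $p^{*}$: when $p\equiv 1\pmod 4$ one has $p^{*}=p$ and $(\sqrt{p^{*}})^{e}=\sqrt{q}$, while when $p\equiv 3\pmod 4$ one has $p^{*}=-p$ and $(\sqrt{p^{*}})^{e}=(\sqrt{-1})^{e}\sqrt{q}$, giving the stated expressions $(-1)^{e-1}\sqrt q$ and $(-1)^{e-1}(\sqrt{-1})^{e}\sqrt q$ respectively.
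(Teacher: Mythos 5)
Your outline is correct, but note that the paper itself offers no proof of this lemma at all: it is quoted verbatim from the reference \cite{L} (Lidl--Niederreiter), so there is no internal argument to compare against. What you have written is, in effect, a reconstruction of the standard textbook proof --- and it is the same route the cited reference takes: identify $\eta$ with $\eta_p\circ N$ (your justification via surjectivity of the norm is right), observe that the canonical additive character of $\Bbb F_q$ is the lift $\chi_p\circ \Tr_{q/p}$, apply the Davenport--Hasse relation (which the paper states later as Lemma 3.2) to reduce to the prime field, and there compute $g^2=\eta_p(-1)\,|g|^2=p^{*}$ from the conjugation identity. The one place where your argument is not self-contained is exactly the place you flag: the determination of the sign of $g$, i.e.\ Gauss's theorem that $g=+\sqrt{p^{*}}$ with the principal root. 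That is the genuinely hard content of the lemma, and you invoke it rather than prove it; since the paper disposes of the entire lemma by citation, this is a reasonable stopping point, but a referee asking for a self-contained proof would demand precisely that step. One last cosmetic remark: your identification $(\sqrt{p^{*}})^{e}=\sqrt{(p^{*})^{e}}$ silently adopts the convention that $\sqrt{(p^{*})^{e}}$ means the $e$-th power of the principal root $\sqrt{p^{*}}$ (for $e\equiv 2\pmod 4$ and $p\equiv 3\pmod 4$ the two readings differ by a sign); the paper's own statement uses the same convention, as its second displayed form $(-1)^{e-1}(\sqrt{-1})^{e}\sqrt{q}$ shows, so you are consistent with it.
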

\subsection{Jacobi sums}
If $\psi$ is a multiplicative character of $\Bbb F_{q}$, then $\psi$ is defined for all nonzero elements of $\Bbb F_{q}$. It is now convenient to extend the definition of $\psi$ by setting $\psi(0)=1$ if $\psi$ is the trivial character and $\psi(0)=0$ if $\psi$ is a nontrivial character.

Let $\psi_{1},\ldots,\psi_{m}$ be $m$ multiplicative characters of $\Bbb F_{q}$. Then the sum
$$J(\psi_{1},\ldots,\psi_{m})=\sum_{c_{1}+\cdots+c_{m}=1}\psi_{1}(c_{1})\cdots\psi_{m}(c_{m}),$$
with the summation extended over all $m$-tuples $(c_{1},\ldots,c_{m})$ of elements of $\Bbb F_{q}$ satisfying $c_{1}+\cdots+c_{m}=1$, is called a \emph{Jacobi sum} in $\Bbb F_{q}$.

A relationship between Jacobi sums and Gauss sums is given in the following.

\begin{lem} (\cite{IR})
If $\varphi$ is a cubic multiplicative character of $\Bbb F_{q}$, then
$$G(\varphi,\chi)^{3}=qJ(\varphi,\varphi).$$
\end{lem}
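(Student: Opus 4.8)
The plan is to deduce the identity from the general multiplicative relation between Gauss sums and Jacobi sums, specialized to the case of a cubic character. First I would establish, for any two multiplicative characters $\psi_1,\psi_2$ of $\Bbb F_q$ with $\psi_1\psi_2$ nontrivial, the product formula
$$G(\psi_1,\chi)\,G(\psi_2,\chi)=G(\psi_1\psi_2,\chi)\,J(\psi_1,\psi_2).$$
This is obtained by expanding the left-hand side as the double sum $\sum_{x,y}\psi_1(x)\psi_2(y)\chi(x+y)$ (extending the summation from $\Bbb F_q^{*}$ to $\Bbb F_q$ via the convention $\psi(0)=0$ for nontrivial $\psi$) and collecting terms according to the value $t=x+y$. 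The contribution from $t=0$ vanishes, since it equals $\psi_2(-1)\sum_{x\ne 0}(\psi_1\psi_2)(x)=0$ by orthogonality of multiplicative characters; this is exactly where the hypothesis that $\psi_1\psi_2$ is nontrivial is used. For $t\ne 0$ I would substitute $x=tc_1$, $y=t(1-c_1)$, so that $c_1$ ranges over $\Bbb F_q$ and the remaining sum factors as $\bigl(\sum_{t\ne 0}(\psi_1\psi_2)(t)\chi(t)\bigr)\bigl(\sum_{c_1}\psi_1(c_1)\psi_2(1-c_1)\bigr)=G(\psi_1\psi_2,\chi)\,J(\psi_1,\psi_2)$, which yields the formula.

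Next I would specialize to $\psi_1=\psi_2=\varphi$. Since $\varphi$ has order $3$, the product $\varphi^2=\bar\varphi$ is nontrivial, so the formula applies and gives $G(\varphi,\chi)^2=G(\varphi^2,\chi)\,J(\varphi,\varphi)$. Multiplying through by $G(\varphi,\chi)$ reduces the target identity to the single claim
$$G(\varphi,\chi)\,G(\varphi^2,\chi)=q.$$

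For this last step I would invoke the two facts already recorded in the excerpt, namely $G(\bar\psi,\chi)=\psi(-1)\overline{G(\psi,\chi)}$ together with the standard value $G(\varphi,\chi)\,\overline{G(\varphi,\chi)}=|G(\varphi,\chi)|^2=q$ for the nontrivial character $\varphi$. Writing $\varphi^2=\bar\varphi$, this gives $G(\varphi,\chi)\,G(\varphi^2,\chi)=\varphi(-1)\,|G(\varphi,\chi)|^2=\varphi(-1)\,q$. Finally $\varphi(-1)=1$: the value $\varphi(-1)$ is simultaneously a cube root of unity (because $\varphi^3=\psi_0$) and a square root of unity (because $\varphi(-1)^2=\varphi(1)=1$), and the only common value is $1$. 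Substituting back yields $G(\varphi,\chi)^3=qJ(\varphi,\varphi)$.

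I expect the main obstacle to be the careful bookkeeping in the double-sum manipulation of the first step, in particular justifying the vanishing of the diagonal $t=0$ term and correctly handling the extension of $\varphi$ to the value $0$ inside the Jacobi sum, rather than the concluding character evaluation, which is routine once $|G(\varphi,\chi)|^2=q$ and $\varphi(-1)=1$ are in place.
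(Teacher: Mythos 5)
Your proposal is correct, and it is essentially the argument behind the paper's citation: the paper does not prove this lemma at all, but simply quotes it from \cite{IR}, where the proof is exactly your route --- the product formula $G(\psi_1,\chi)G(\psi_2,\chi)=G(\psi_1\psi_2,\chi)J(\psi_1,\psi_2)$ for $\psi_1\psi_2$ nontrivial, specialized to $\psi_1=\psi_2=\varphi$, combined with $G(\varphi,\chi)G(\bar\varphi,\chi)=\varphi(-1)\,|G(\varphi,\chi)|^2=q$ and the observation that $\varphi(-1)$, being both a square and a cube root of unity, equals $1$. All the auxiliary facts you invoke ($G(\bar\psi,\chi)=\psi(-1)\overline{G(\psi,\chi)}$, $|G(\psi,\chi)|=q^{1/2}$ for $\psi\neq\psi_0$, and the convention $\psi(0)=0$ in the Jacobi sum) are indeed recorded in the paper, so your write-up is self-contained relative to it.
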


Let $\varphi$ be a cubic multiplicative character of $\Bbb F_{q}$. We give some brief facts about $J(\varphi,\varphi)$. It is clear that the values of $\varphi$ are in the set
$\{1,\omega,\omega^{2}\}$, where $\omega=\frac{-1+\sqrt{-3}}{2}$. Hence
$$J(\varphi,\varphi)=\sum_{u+v=1}\varphi(u)\varphi(v)\in \Bbb Z[\omega].$$
Then we have $J(\varphi,\varphi)=a+b\omega$ with $a,b\in \Bbb Z$ and
$$q=|J(\varphi,\varphi)|^{2}=a^{2}-ab+b^{2}.$$
The following lemma, which can be found in \cite{IR}, will be used in this correspondence.

\begin{lem}
Suppose that $q\equiv 1\pmod{3}$ and that $\varphi$ is a cubic multiplicative character of $\Bbb F_{q}$. Set $J(\varphi,\varphi)=a+b\omega$ as above. Then

(a) $b\equiv 0\pmod{3}$;

(b) $a\equiv-1\pmod{3}$.

Let $A=2a-b$ and $B=b/3$. Then $A\equiv 1\pmod{3}$ and $4q=A^{2}+27B^{2}$. And $A$ is uniquely determined by $4q=A^{2}+27B^{2}$.
\end{lem}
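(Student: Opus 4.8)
The plan is to reduce the Jacobi sum modulo $3$ inside the ring $\ZZ[\omega]$, read off the congruences (a) and (b) from the resulting value, and then verify the quadratic identity by a direct substitution. Throughout I write $\lambda=1-\omega$; since $(1-\omega)^{2}=-3\omega$, the ideals $(3)$ and $(\lambda)^{2}$ coincide, so for a rational integer $m$ one has $3\mid m$ if and only if $\lambda^{2}\mid m$, and in particular $\omega\equiv 1\pmod{\lambda}$. The first and main step is to establish the single congruence
\begin{equation}\label{key}
J(\varphi,\varphi)\equiv -1 \pmod{3}\quad\text{in }\ZZ[\omega].
\end{equation}

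To prove \eqref{key} I would expand $J(\varphi,\varphi)=\sum_{x\neq 0,1}\varphi(x)\varphi(1-x)$ (using $\varphi(0)=0$) and note that every value $\varphi(x)$ with $x\neq 0$ is $\omega^{a_{x}}$ for some $a_{x}\in\{0,1,2\}$, while $\omega=1-\lambda$ gives the $\lambda$-adic expansion $\omega^{m}\equiv 1-m\lambda\pmod{\lambda^{2}}$ for every integer $m$. Hence $\varphi(x)\varphi(1-x)\equiv 1-(a_{x}+a_{1-x})\lambda\pmod{\lambda^{2}}$, so that
\[
J(\varphi,\varphi)\equiv (q-2)-\lambda\sum_{x\neq 0,1}\bigl(a_{x}+a_{1-x}\bigr)\pmod{\lambda^{2}}.
\]
Because $x\mapsto 1-x$ permutes $\Bbb F_{q}\setminus\{0,1\}$, the inner sum equals $2\sum_{x\in\Bbb F_{q}^{*}}a_{x}$; and since $\varphi$ has order $3$, each exponent $0,1,2$ occurs exactly $(q-1)/3$ times, giving $\sum_{x\in\Bbb F_{q}^{*}}a_{x}=q-1$. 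As $q\equiv 1\pmod 3$ we have $\lambda^{2}\mid(q-1)$, so the $\lambda$-term vanishes modulo $\lambda^{2}$ and $q-2\equiv -1\pmod{\lambda^{2}}$, which is precisely \eqref{key}. Writing $J(\varphi,\varphi)=a+b\omega$ and using that $3\ZZ[\omega]=3\ZZ\oplus 3\ZZ\omega$, the congruence \eqref{key} forces $a\equiv -1\pmod 3$ and $b\equiv 0\pmod 3$, which are statements (b) and (a). Then $A=2a-b$ and $B=b/3$ are integers, $A\equiv 2(-1)-0\equiv 1\pmod 3$, and a one-line expansion yields $A^{2}+27B^{2}=(2a-b)^{2}+3b^{2}=4(a^{2}-ab+b^{2})=4q$, where $q=|J(\varphi,\varphi)|^{2}=a^{2}-ab+b^{2}$ was recorded before the statement.

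The delicate point, and the step I expect to be the main obstacle, is the final uniqueness assertion. Here I would pass to the factorization of $q$ in the principal ideal domain $\ZZ[\omega]$. The conditions $a\equiv -1,\ b\equiv 0\pmod 3$ say exactly that $a+b\omega$ is the \emph{primary} associate among the six unit multiples of a given element of norm $q$, and a solution of $4q=A^{2}+27B^{2}$ with $A\equiv 1\pmod 3$ corresponds to such a primary element through $a+b\omega=\tfrac{1}{2}\bigl(A+3B\sqrt{-3}\bigr)$. For prime $q$ this pins $A$ down completely, since the factorization $q=\pi\bar\pi$ is essentially unique, the six units collapse to a single primary representative, and passing to $\bar\pi$ only flips the sign of $B$ while leaving $A=2a-b$ fixed; uniqueness of $A$ then follows. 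For a genuine prime power $q=p^{e}$ the norm-$q$ elements are the $\pi^{i}\bar\pi^{\,e-i}$, whose primary associates can give several admissible $A$, so the assertion must be read as singling out the primary representative furnished by $J(\varphi,\varphi)$ itself, i.e.\ $A$ is uniquely determined once the normalization $A\equiv 1\pmod 3$ coming from \eqref{key} is imposed on the representation attached to the Jacobi sum. Making this normalization precise via the primary-element bookkeeping in $\ZZ[\omega]$ is where the real care is needed; the remaining assertions (a), (b) and $4q=A^{2}+27B^{2}$ are then immediate from \eqref{key}.
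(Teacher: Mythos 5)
The first thing to note is that the paper does not prove this lemma at all: it is quoted as a known result from \cite{IR}, where it is established (for a prime $q=p$) by a different argument, namely combining $G(\varphi,\chi)^{3}=qJ(\varphi,\varphi)$ with the congruence $G(\varphi,\chi)^{3}\equiv\sum_{t}\varphi(t)^{3}\chi(t)^{3}=\sum_{t\neq 0}\chi(3t)=-1\pmod{3}$ and $q\equiv 1\pmod 3$. Your $\lambda$-adic expansion of the Jacobi sum itself is therefore a genuinely different route, and it is correct: writing $\varphi(x)=\omega^{a_{x}}$, using $\omega^{m}\equiv 1-m\lambda\pmod{\lambda^{2}}$, the involution $x\mapsto 1-x$, the equidistribution $\sum_{x\in\Bbb F_{q}^{*}}a_{x}=q-1$ (together with the implicit fact $a_{1}=0$, which you need when replacing $\sum_{x\neq 0,1}a_{x}$ by $\sum_{x\in\Bbb F_{q}^{*}}a_{x}$), and $(\lambda^{2})=(3)$, you get $J(\varphi,\varphi)\equiv q-2\equiv -1\pmod 3$, from which (a), (b), $A\equiv 1\pmod 3$ and $4q=A^{2}+27B^{2}$ follow exactly as you say. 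What this buys: a self-contained proof inside $\ZZ[\omega]$, valid for every prime power $q\equiv 1\pmod 3$, with no Gauss sums and no passage to $\ZZ[\omega,\zeta_{p}]$.

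Your hesitation over the final uniqueness assertion is also well placed; in fact you have located an error in the statement rather than left a gap in your own argument. For prime $q$ your sketch is a complete proof: solutions of $4q=A^{2}+27B^{2}$ with $A\equiv 1\pmod 3$ correspond to primary elements of norm $q$, these form a single conjugate pair $\{\alpha,\bar\alpha\}$, and conjugation fixes $A=\alpha+\bar\alpha=2a-b$ while negating $B$. But for genuine prime powers the claim is literally false: for $q=49$ the Jacobi sum is $J(\varphi,\varphi)=5-3\omega$ (or its conjugate $8+3\omega$), giving $(A,B)=(13,\pm 1)$, yet $(A,B)=(-14,0)$ also satisfies $4\cdot 49=A^{2}+27B^{2}$ with $-14\equiv 1\pmod 3$; this spurious solution comes from the primary element $-7$ of norm $49$, which is not a Jacobi sum. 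So the last sentence of the lemma must either be restricted to prime $q$ (as in \cite{IR}) or supplemented by the normalization $p\nmid A$, which excludes exactly the elements $\pi^{i}\bar\pi^{e-i}$ with $0<i<e$ (all divisible by $p$), after which your unique-factorization argument extends verbatim to $q=p^{e}$. This ambiguity propagates to the way the paper later pins down $A$ in Theorem 4.10 from the equation alone, and your proposed reading --- that $A$ is the value attached to $J(\varphi,\varphi)$ itself --- is the correct repair.
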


Jacobi sums have been widely used in coding theory. For more details about Jacobi sums, the reader is referred to \cite{IR, L}.
\subsection{Cyclotomic classes}
Let $\delta$ be a primitive element of $\Bbb F_{q}$. For any divisor $N$ of $q-1$, we define $$C_{i}^{(N)}=\delta^{i} \langle \delta^{N}\rangle$$ for $i=0,1,\cdots,N-1$, which are called the \emph{cyclotomic classes} of order $N$ of $\Bbb F_{q}^{*}$. Note that $C_{0}^{(N)}$ is a cyclic subgroup of $\Bbb F_{q}^{*}$. And there is a coset decomposition as follows:
$$\Bbb F_{q}^{*}=\bigcup_{i=0}^{N-1}C_{i}^{(N)}.$$

\section{Weights of the cyclic code $\mathcal{C}_{((\frac{q^{k}-1}{q-1})e_{1},e_{2})}$}
In this section, we use Guass sums to represent the weights of the codewords in the cyclic code $\mathcal{C}_{((\frac{q^{k}-1}{q-1})e_{1},e_{2})}$ defined by (1.1).
For a codeword $c(a,b)$ in $\mathcal{C}_{((\frac{q^{k}-1}{q-1})e_{1},e_{2})}$, its Hamming weight is equal to
\begin{eqnarray*}
w_{H}(c(a,b))&=&|\{i:a\gamma^{\frac{q^{k}-1}{q-1}e_{1}i}+\Tr_{q^{k}/q}(b\gamma^{e_{2}i})\neq 0,0\leq i \leq q^{k}-2\}|\\
&=&q^{k}-1-Z(a,b),
\end{eqnarray*}
where
\begin{eqnarray*}Z(a,b)&=&|\{i:a\gamma^{\frac{q^{k}-1}{q-1}e_{1}i}+\Tr_{q^{k}/q}(b\gamma^{e_{2}i})= 0,0\leq i \leq q^{k}-2\}|\\
&=&\frac{1}{q}\sum_{i=0}^{q^{k}-2}\sum_{y\in \Bbb F_{q}}\chi(ya\gamma^{\frac{q^{k}-1}{q-1}e_{1}i}+y\Tr_{q^{k}/q}(b\gamma^{e_{2}i}))\\
&=&\frac{q^{k}-1}{q}+\frac{1}{q}\sum_{y\in \Bbb F_{q}^{*}}\sum_{x\in \Bbb F_{q^{k}}^{*}}\chi(y a x ^{\frac{q^{k}-1}{q-1}e_{1}})\cdot \chi'(ybx^{e_{2}}),
\end{eqnarray*}
where $\chi'=\chi\cdot \Tr_{q^{k}/q}$  is a lift of $\chi$ from $\Bbb F_{q}$ to $\Bbb F_{q^{k}}$.

Let $$S_{(e_{1},e_{2})}(a,b):=\sum_{x\in \Bbb F_{q^{k}}^{*}}\chi(ax^{\frac{q^{k}-1}{q-1}e_{1}})\cdot \chi'(bx^{e_{2}})$$ and
$$T_{(e_{1},e_{2})}(a,b):=\sum_{y\in \Bbb F_{q}^{*}}S_{(e_{1},e_{2})}(ya,yb).$$

In order to compute the valuation of $T_{e_1,e_2}(a,b)$, we need the following two lemmas (see \cite{L}).
\begin{lem}
Let $\chi$ be a nontrivial additive character of $\Bbb F_{q}$ and $\psi$ a multiplicative character of $\Bbb F_{q}$ of order $s=\gcd(n,q-1)$. Then
$$\sum_{x\in \Bbb F_{q}}\chi(ax^{n}+b)=\chi(b)\sum_{j=1}^{s-1}\bar{\psi}^{j}(a)G(\psi^{j},\chi)$$
for any $a,b\in \Bbb F_{q}$ with $a\neq 0$.
\end{lem}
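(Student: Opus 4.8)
The plan is to evaluate the sum by sorting the values of $x$ according to the value $t=x^{n}$. First I would isolate the term $x=0$, which contributes $\chi(b)$, and write
$$\sum_{x\in \Bbb F_{q}}\chi(ax^{n}+b)=\chi(b)+\sum_{x\in \Bbb F_{q}^{*}}\chi(ax^{n}+b).$$
For the remaining sum I would replace the summation over $x$ by a summation over $t=x^{n}\in \Bbb F_{q}^{*}$, weighting each $t$ by $N(t):=|\{x\in \Bbb F_{q}^{*}:x^{n}=t\}|$.

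The crucial input is an expression for $N(t)$ in terms of multiplicative characters. Since $\Bbb F_{q}^{*}$ is cyclic of order $q-1$, the endomorphism $x\mapsto x^{n}$ and the endomorphism $x\mapsto x^{s}$ have the same kernel (of order $\gcd(n,q-1)=\gcd(s,q-1)=s$) and hence the same image, namely the subgroup of $s$-th powers, each being $s$-to-one onto that image; so $N(t)$ equals the number of $s$-th roots of $t$. By orthogonality, the characters that are trivial on the subgroup of $s$-th powers are precisely $\psi^{0},\psi^{1},\ldots,\psi^{s-1}$ (as $\psi$ has order $s$), whence
$$N(t)=\sum_{j=0}^{s-1}\psi^{j}(t)\qquad(t\neq 0),$$
the right-hand side being $s$ when $t$ is an $s$-th power and $0$ otherwise. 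I expect this counting step to be the main point of the argument; everything afterward is formal manipulation.

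Substituting this into the sum over $\Bbb F_{q}^{*}$ and interchanging the order of summation gives
$$\sum_{x\in \Bbb F_{q}^{*}}\chi(ax^{n}+b)=\sum_{j=0}^{s-1}\sum_{t\in \Bbb F_{q}^{*}}\psi^{j}(t)\,\chi(at+b).$$
In each inner sum I would use the hypothesis $a\neq 0$ to substitute $u=at$; then $\psi^{j}(t)=\bar\psi^{j}(a)\psi^{j}(u)$ and $\chi(at+b)=\chi(b)\chi(u)$, so the inner sum equals $\chi(b)\,\bar\psi^{j}(a)\,G(\psi^{j},\chi)$ directly from the definition of the Gauss sum.

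Finally I would separate the term $j=0$. Since $\psi^{0}=\psi_{0}$ is trivial and $G(\psi_{0},\chi)=-1$, this term contributes $-\chi(b)$, which cancels exactly the $\chi(b)$ coming from $x=0$. Collecting the surviving terms $j=1,\ldots,s-1$ then yields
$$\sum_{x\in \Bbb F_{q}}\chi(ax^{n}+b)=\chi(b)\sum_{j=1}^{s-1}\bar\psi^{j}(a)\,G(\psi^{j},\chi),$$
which is the claimed identity.
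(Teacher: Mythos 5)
Your proof is correct, and there is nothing to compare it against: the paper does not prove this lemma, but simply quotes it from Lidl--Niederreiter \cite{L} (it is Theorem 5.30 there). Your argument --- splitting off $x=0$, counting the fibers of $x\mapsto x^{n}$ via the orthogonality identity $N(t)=\sum_{j=0}^{s-1}\psi^{j}(t)$ for the characters trivial on the $s$-th powers, substituting $u=at$ to produce the Gauss sums, and cancelling the $j=0$ term $-\chi(b)$ against the $x=0$ contribution $\chi(b)$ --- is exactly the standard proof of that theorem, with every step (including the fact that the $n$-th powers coincide with the $s$-th powers in the cyclic group $\Bbb F_{q}^{*}$) correctly justified.
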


\begin{lem}(Davenport-Hasse Theorem)
Let $\chi$ be an additive and $\psi$ a multiplicative character of $\Bbb F_{q}$, not both of them trivial. Suppose $\chi$ and $\psi$ are lifted to characters $\chi'$ and $\psi'$, respectively, of the finite field $\Bbb F_{q^k}$ of $\Bbb F_{q}$ with $[\Bbb F_{q^k}:\Bbb F_{q}]=k$. Then
$$G(\psi',\chi')=(-1)^{k-1}G(\psi,\chi)^{k}.$$
\end{lem}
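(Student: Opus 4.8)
The plan is to prove this classical Davenport--Hasse relation by the generating-function ($L$-function) method, treating the case where $\chi$ and $\psi$ are both nontrivial as the main one. The two remaining cases are immediate: if $\psi$ is trivial and $\chi$ is not, then $G(\psi_{0},\chi)=-1$ and likewise $G(\psi_{0}',\chi')=-1$, so one only checks $(-1)^{k-1}(-1)^{k}=-1$; if $\chi$ is trivial and $\psi$ is not, then both $G(\psi,\chi)$ and $G(\psi',\chi')$ vanish. So I would reduce to $\chi,\psi$ both nontrivial and focus there.

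First I would attach to every monic polynomial $g(x)=x^{n}-c_{1}x^{n-1}+\cdots+(-1)^{n}c_{n}\in \Bbb F_{q}[x]$ the weight $\lambda(g)=\chi(c_{1})\psi(c_{n})$, with $\lambda(1)=1$. Since the coefficient of $x^{n-1}$ in a product adds (a ``trace'') while the constant term multiplies (a ``norm''), one checks $\lambda(g_{1}g_{2})=\lambda(g_{1})\lambda(g_{2})$, so $\lambda$ is completely multiplicative on monic polynomials. Next I would form the formal series $L(t)=\sum_{g\text{ monic}}\lambda(g)t^{\deg g}$ and evaluate it two ways. Grouping by degree, for $n\geq 2$ the middle coefficients range freely while the factor $\sum_{c_{1}\in \Bbb F_{q}}\chi(c_{1})=0$ kills the degree-$n$ contribution; the degree $0$ and $1$ terms give $L(t)=1+G(\psi,\chi)t$. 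On the other hand, unique factorization into monic irreducibles together with complete multiplicativity yields the Euler product $L(t)=\prod_{f}(1-\lambda(f)t^{\deg f})^{-1}$ over monic irreducibles $f$.

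The crux is to exploit the resulting identity $1+G(\psi,\chi)t=\prod_{f}(1-\lambda(f)t^{\deg f})^{-1}$. Taking logarithms and comparing the coefficient of $t^{k}$ gives $\frac{(-1)^{k-1}}{k}G(\psi,\chi)^{k}$ on the left and $\frac{1}{k}\sum_{m\mid k}m\sum_{\deg f=m}\lambda(f)^{k/m}$ on the right. For a monic irreducible $f$ of degree $m\mid k$ with a root $\beta\in \Bbb F_{q^{m}}\subseteq \Bbb F_{q^{k}}$, its $x^{m-1}$-coefficient and constant term are $\Tr_{q^{m}/q}(\beta)$ and $\mathrm{N}_{q^{m}/q}(\beta)$, so $\lambda(f)=\chi(\Tr_{q^{m}/q}(\beta))\psi(\mathrm{N}_{q^{m}/q}(\beta))$. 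Using transitivity of trace and norm, namely $\Tr_{q^{k}/q}(\beta)=\tfrac{k}{m}\Tr_{q^{m}/q}(\beta)$ and $\mathrm{N}_{q^{k}/q}(\beta)=\mathrm{N}_{q^{m}/q}(\beta)^{k/m}$ for $\beta\in \Bbb F_{q^{m}}$, I would show $\lambda(f)^{k/m}=\chi'(\beta)\psi'(\beta)$ for each root $\beta$. Since $f$ has exactly $m$ roots and every element of $\Bbb F_{q^{k}}$ is a root of exactly one such $f$, summing gives $\sum_{m\mid k}m\sum_{\deg f=m}\lambda(f)^{k/m}=\sum_{\beta\in \Bbb F_{q^{k}}}\chi'(\beta)\psi'(\beta)=G(\psi',\chi')$, and equating the two coefficients of $t^{k}$ yields $G(\psi',\chi')=(-1)^{k-1}G(\psi,\chi)^{k}$.

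The main obstacle is precisely this last bookkeeping step: correctly matching each irreducible of degree $m\mid k$ to the $m$ conjugate roots it contributes in $\Bbb F_{q^{k}}$, and verifying through transitivity that $\lambda(f)^{k/m}$ collapses to the lifted characters evaluated at a single root. One should also confirm that $\psi'=\psi\circ \mathrm{N}_{q^{k}/q}$ is nontrivial (it is, since the norm is surjective onto $\Bbb F_{q}^{*}$), which guarantees $\psi'(0)=0$ and hence that $\sum_{\beta\in \Bbb F_{q^{k}}}\chi'(\beta)\psi'(\beta)$ is genuinely the Gauss sum $G(\psi',\chi')$ with no spurious contribution from $\beta=0$.
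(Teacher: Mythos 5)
The paper itself gives no proof of this lemma: it is stated as Lemma 3.2 and cited directly from Lidl--Niederreiter \cite{L}, so there is no internal argument to compare against. Your proposal is the classical generating-function (Hasse--Davenport $L$-series) proof, which is essentially the proof given in that cited reference, and it is correct and complete in all essentials: the multiplicativity of the weight $\lambda(g)=\chi(c_{1})\psi(c_{n})$ under the stated sign convention, the collapse of $L(t)$ to $1+G(\psi,\chi)t$ via $\sum_{c_{1}}\chi(c_{1})=0$, the Euler product, and the coefficient comparison after taking logarithms are all sound. You also handled the genuinely delicate points correctly: the transitivity computations $\Tr_{q^{k}/q}(\beta)=\tfrac{k}{m}\Tr_{q^{m}/q}(\beta)$ and $\mathrm{N}_{q^{k}/q}(\beta)=\mathrm{N}_{q^{m}/q}(\beta)^{k/m}$ for $\beta\in\Bbb F_{q^{m}}$, the bijection between elements of $\Bbb F_{q^{k}}$ and roots of monic irreducibles of degree dividing $k$, the vanishing of the $\beta=0$ term because $\psi'$ is nontrivial (norm surjectivity), and the two degenerate cases where exactly one of $\chi,\psi$ is trivial.
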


\begin{lem}
Let $e_{1},e_{2}$ be positive integers such that $\gcd(\frac{q^{k}-1}{q-1},e_{2})=1$,  $\gcd(q-1,ke_{1}-e_{2})=d$ with $d$ a positive integer. Let  $\chi$ be the canonical additive character of  $\Bbb F_{q}$,  and  $a\in \Bbb F_q^{*}$, $b\in \Bbb F_{q^k}^{*} $. Then
$$ T_{(e_{1},e_{2})}(a,b)=
(-1)^{k-1}\sum_{i=0}^{d-1}\bar\varphi^i(b^{\frac{q^{k}-1}{q-1}}a^{-k})G(\bar{\varphi}^{ki},\chi)G( \varphi^i,\chi)^k,$$
where  $\varphi$ is a multiplicative character of order $d$ of $\Bbb F_q$. In particular, $T_{(e_{1},e_{2})}(a,b)=1$ if $d=1$.
\end{lem}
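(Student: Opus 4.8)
The plan is to compute $T_{(e_1,e_2)}(a,b)$ by unfolding both additive characters into multiplicative characters via Gauss sums and then isolating the surviving terms by orthogonality. First I would record the reductions that make the bookkeeping clean. Since $\frac{q^k-1}{q-1}=1+q+\cdots+q^{k-1}$, the quantity $x^{\frac{q^k-1}{q-1}}$ is the norm $N(x):=x^{\frac{q^k-1}{q-1}}\in\mathbb{F}_q^*$, so $x^{\frac{q^k-1}{q-1}e_1}=N(x)^{e_1}$ already lies in $\mathbb{F}_q$ and $\chi$ applies to it directly; likewise $b^{\frac{q^k-1}{q-1}}=N(b)$. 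Reducing mod $q-1$ gives $\frac{q^k-1}{q-1}\equiv k\pmod{q-1}$, hence $\gcd(q-1,\,ke_1-e_2)=\gcd(q-1,\,\frac{q^k-1}{q-1}e_1-e_2)=d$, and as already noted $\gcd(k,d)=1$. These two facts will drive the final count and re-indexing.

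Next I would write $T_{(e_1,e_2)}(a,b)=\sum_{y\in\mathbb{F}_q^*}\sum_{x\in\mathbb{F}_{q^k}^*}\chi(ya\,N(x)^{e_1})\,\chi'(yb\,x^{e_2})$ and apply the Fourier inversion $\chi(c)=\frac{1}{q-1}\sum_{\psi}\bar\psi(c)G(\psi,\chi)$ over $\mathbb{F}_q^*$ to the first factor and its analogue over $\mathbb{F}_{q^k}^*$ to the second; both are legitimate because the arguments are nonzero. After interchanging summations, the inner sums over $y$ and $x$ factor and are evaluated by orthogonality. The $y$-sum forces $\psi'|_{\mathbb{F}_q^*}=\bar\psi$ (call it condition (ii)), while the $x$-sum, after writing $\bar\psi(N(x)^{e_1})=(\bar\psi^{e_1}\circ N)(x)$ and $\bar{\psi'}(x^{e_2})=\bar{\psi'}^{e_2}(x)$, forces $\psi'^{e_2}=(\bar\psi\circ N)^{e_1}$ (condition (i)). Each surviving pair contributes $G(\psi,\chi)G(\psi',\chi')\bar\psi(a)\bar{\psi'}(b)$, so $T_{(e_1,e_2)}(a,b)=\sum G(\psi,\chi)G(\psi',\chi')\bar\psi(a)\bar{\psi'}(b)$ over all $(\psi,\psi')$ satisfying (i) and (ii).

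The hard part is to solve the two constraints and count the surviving pairs. I would pass to index notation: fix a primitive $\gamma$ of $\mathbb{F}_{q^k}^*$, so $N(\gamma)$ is primitive in $\mathbb{F}_q^*$, and encode $\psi,\psi'$ by their values on these generators. Condition (ii) becomes a congruence modulo $q-1$ and (i) a congruence modulo $q^k-1$; combining them and invoking $\gcd(\frac{q^k-1}{q-1},e_2)=1$ reduces (i) to a congruence modulo $\frac{q^k-1}{q-1}$ with a \emph{unique} solution, while solvability forces $\psi$ into the order-$d$ group $\langle\varphi\rangle$ — exactly $d$ admissible $\psi$, each with a unique matching $\psi'$. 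Rather than track indices explicitly, I would then guess and verify the clean parametrization: for $i=0,\dots,d-1$ the pair $\psi=\bar\varphi^{ki}$, $\psi'=\varphi^i\circ N$ satisfies (i) and (ii). Checking (ii) uses $N(z)=z^k$ for $z\in\mathbb{F}_q^*$, giving $(\varphi^i\circ N)|_{\mathbb{F}_q^*}=\varphi^{ik}$; checking (i) uses $d\mid(ke_1-e_2)$ together with surjectivity of $N$. Because $\gcd(k,d)=1$, the characters $\bar\varphi^{ki}$ range over all of $\langle\varphi\rangle$, so these are precisely the $d$ solution pairs. I expect this count — confirming exactly $d$ pairs survive and that the guessed pair is the correct one — to be the only delicate step.

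Finally I would substitute and simplify using $\bar{\psi'}(b)=\bar\varphi^i(N(b))=\bar\varphi^i(b^{\frac{q^k-1}{q-1}})$, $\bar\psi(a)=\varphi^{ki}(a)=\bar\varphi^i(a^{-k})$, $G(\psi,\chi)=G(\bar\varphi^{ki},\chi)$, and the Davenport--Hasse theorem $G(\varphi^i\circ N,\chi')=(-1)^{k-1}G(\varphi^i,\chi)^k$. Collecting these yields exactly $T_{(e_1,e_2)}(a,b)=(-1)^{k-1}\sum_{i=0}^{d-1}\bar\varphi^i(b^{\frac{q^k-1}{q-1}}a^{-k})G(\bar\varphi^{ki},\chi)G(\varphi^i,\chi)^k$. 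For $d=1$ only the term $i=0$ survives, and since the trivial-character Gauss sum equals $-1$ this collapses to $(-1)^{k-1}(-1)(-1)^k=1$, giving the stated special case. Everything outside the counting step is orthogonality of characters and a single application of Davenport--Hasse.
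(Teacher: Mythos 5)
Your proposal is correct, and it reaches the formula by a genuinely different route than the paper. The paper never Fourier-expands the base-field character $\chi$ at all: it decomposes $\Bbb F_{q^k}^{*}$ into the cosets $\gamma^{i}\langle\gamma^{q-1}\rangle$, expands only the inner sums $\sum\chi'(b\gamma^{ie_2}x^{q-1})$ into norm-lifted multiplicative characters (invoking Davenport--Hasse already at that stage), and then applies the change of variables $z=ax^{e_1}y$, after which the $z$-sum is literally the Gauss sum $G(\bar\psi^{k},\chi)$ and a single orthogonality relation in $x$, via $\gcd(q-1,e_2-ke_1)=d$, kills every character outside $\langle\varphi\rangle$. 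You instead expand both additive characters at once, reduce everything to multiplicative orthogonality in $y$ and $x$, and are left with the constraint system (i)--(ii) on pairs $(\psi,\psi')$, solved by a counting argument, with Davenport--Hasse applied only at the very end. Your delicate step does close, and in fact closes cleanly: condition (i) reduced modulo $\frac{q^k-1}{q-1}$ forces, by $\gcd(\frac{q^k-1}{q-1},e_2)=1$, that $\psi'$ is trivial on the kernel of the norm, so $\psi'=\lambda\circ N$ for a unique $\lambda\in\widehat{\Bbb F}_q^{*}$; then (ii) reads $\bar\psi=\lambda^{k}$ (since $N(z)=z^{k}$ on $\Bbb F_q^{*}$) and (i) reads $\lambda^{e_2}=\bar\psi^{e_1}$, whose compatibility is $\lambda^{ke_1-e_2}=\psi_0$, i.e.\ $\lambda\in\langle\varphi\rangle$ --- exactly $d$ choices, each determining $\psi$, recovering precisely your pairs $(\bar\varphi^{ki},\varphi^{i}\circ N)$. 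The trade-off between the two arguments: the paper's substitution $z=ax^{e_1}y$ avoids constraint-solving entirely and is shorter, with $G(\bar\varphi^{ki},\chi)$ emerging automatically from the $z$-sum; your dual-side computation costs the counting step (which you correctly flag as the only delicate point) but makes completely explicit which character pairs survive and why, treating the two fields symmetrically. Both proofs consume the hypotheses $\gcd(\frac{q^k-1}{q-1},e_2)=1$ and $\gcd(q-1,ke_1-e_2)=d$ at structurally analogous points, and your evaluation of the $d=1$ case matches the paper's.
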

\begin{proof} Since $\Bbb F_{q^k}^*=\langle \gamma \rangle$ and  $\Bbb F_{q}^{*}=\langle \delta\rangle$, where
 $\delta:=\gamma^{\frac{q^{k}-1}{q-1}}$, there is a coset decomposition of $\Bbb F_{q^{k}}^{*}$ as follows:
$$\Bbb F_{q^{k}}^{*}=\bigcup_{i=0}^{q-2}\gamma^{i}\langle \gamma^{q-1}\rangle.$$
Then we have
\begin{eqnarray*}
S_{(e_{1},e_{2})}(a,b)=\sum_{i=0}^{q^{k}-2}\chi(a\gamma^{\frac{q^{k}-1}{q-1}e_{1}i})\chi'(b\gamma^{e_{2}i})
=\sum_{i=0}^{q-2}\chi(a\delta^{ie_{1}})\sum_{\theta\in \gamma^{i}\langle \gamma^{q-1}\rangle}\chi'(b\theta^{e_{2}}).
\end{eqnarray*}

Since $\gcd(\frac{q^{k}-1}{q-1},e_{2})=1$ and the order of $\gamma^{q-1}$ is equal to $\frac{q^k-1}{q-1}$, we have
\begin{eqnarray*}
\sum_{\theta\in \gamma^{i}\langle \gamma^{q-1}\rangle}\chi'(b\theta^{e_{2}})
&=&\sum_{\omega \in \langle \gamma^{q-1}\rangle}\chi'(b\gamma^{e_{2}i}\omega)\\
&=&\frac{1}{q-1}\sum_{x\in \Bbb F_{q^{k}}^{*}}\chi'(b\gamma^{e_{2}i}x^{q-1}).
\end{eqnarray*}
Let $N$ be the norm mapping from $\Bbb F_{q^{k}}$ to $\Bbb F_{q}$. For a multiplicative character $\psi$ of $\Bbb F_{q}$, it can be lifted from $\Bbb F_{q}$ to $\Bbb F_{q^{k}}$ by $\psi'=\psi\circ N$. Moreover, if $\psi$ is of order $q-1$, then $\psi'$ is of order $q-1$. Let $\psi_{0}^{'}$ a trivial multiplicative character of $\Bbb F_{q^{k}}$,  then  $G(\psi_{0}',\chi')=-1$. By Lemmas 3.1 and 3.2,  we have
\begin{eqnarray*}
\sum_{x\in \Bbb F_{q^{k}}^{*}}\chi'(b\gamma^{e_{2}i}x^{q-1})&=& -1+\sum_{x\in \Bbb F_{q^{k}}}\chi'(b\gamma^{e_{2}i}x^{q-1})\\
&=&G(\psi_{0}',\chi')+\sum_{j=1}^{q-2}(\bar{\psi}')^{j}(b\gamma^{ie_{2}})G(\psi^{'j},\chi')\\
&=&\sum_{\psi\in \widehat{\Bbb F}_{q}^{*}}G({\psi}\circ N,\chi')\bar\psi(N(b\gamma^{ie_{2}}))\\
&=&(-1)^{k-1}\sum_{\psi\in \widehat{\Bbb F}_{q}^{*}}G({\psi},\chi)^{k}\bar\psi(N(b\gamma^{ie_{2}}))\\
&=&(-1)^{k-1}\sum_{\psi\in \widehat{\Bbb F}_{q}^{*}}G({\psi},\chi)^{k}\bar\psi(b^{\frac{q^{k}-1}{q-1}}\delta^{ie_{2}}).
\end{eqnarray*}

Hence we have
$$S_{(e_{1},e_{2})}(a,b)=\frac{(-1)^{k-1}}{q-1}\sum_{x\in \Bbb F_{q}^{*}}\chi(ax^{e_{1}})\sum_{\psi\in \widehat{\Bbb F}_{q}^{*}}G({\psi},\chi)^{k}\bar\psi(b^{\frac{q^{k}-1}{q-1}}x^{e_{2}}).$$
and
$$T_{(e_{1},e_{2})}(a,b)=\frac{(-1)^{k-1}}{q-1}\sum_{x, y\in \Bbb F_{q}^{*}}\chi(ayx^{e_{1}})\sum_{\psi\in \widehat{\Bbb F}_{q}^{*}}G({\psi},\chi)^{k}\bar\psi(b^{\frac{q^{k}-1}{q-1}}y^kx^{e_{2}}).$$
We make a variable transformation as follows:
$$\left\{\begin{array}{ll} x&=x,\\ z&= ax^{e_1}y,\end{array}\right. \mbox{ i.e. } \left\{\begin{array}{ll} x&=x,\\ y&= a^{-1}x^{-e_1}z.\end{array}\right. $$
Note that $z$ runs through $\Bbb F_{q}^{*}$ when $y$ runs through $\Bbb F_{q}^{*}$. Hence by $\gcd(q-1, e_2-ke_1)=d$,

\begin{eqnarray*}T_{(e_{1},e_{2})}(a,b)
&=&\frac{(-1)^{k-1}}{q-1}\sum_{x, z\in \Bbb F_{q}^{*}}\chi(z)\sum_{\psi\in \widehat{\Bbb F}_{q}^{*}}G({\psi},\chi)^{k}\bar\psi(b^{\frac{q^{k}-1}{q-1}}a^{-k}z^kx^{e_{2}-ke_1})\\
&=&\frac{(-1)^{k-1}}{q-1}\sum_{x, z\in \Bbb F_{q}^{*}}\chi(z)\sum_{\psi\in \widehat{\Bbb F}_{q}^{*}}G({\psi},\chi)^{k}\bar\psi(b^{\frac{q^{k}-1}{q-1}}a^{-k}z^kx^d)\\
&=&\frac{(-1)^{k-1}}{q-1}\sum_{z\in \Bbb F_{q}^{*}}\chi(z)\sum_{\psi\in \widehat{\Bbb F}_{q}^{*}}G({\psi},\chi)^{k}\bar\psi(b^{\frac{q^{k}-1}{q-1}}a^{-k}z^k)\sum_{x\in \Bbb F_{q}^{*}}\bar\psi(x^d)\\
&=&\frac{(-1)^{k-1}}{q-1}\sum_{\psi\in \widehat{\Bbb F}_{q}^{*}}G({\psi},\chi)^{k}\bar\psi(b^{\frac{q^{k}-1}{q-1}}a^{-k})\sum_{z\in \Bbb F_{q}^{*}}\chi(z)\bar\psi(z^k)\sum_{x\in \Bbb F_{q}^{*}}\bar\psi^{d}(x)\\
&=&(-1)^{k-1}\sum_{i=0}^{d-1}\bar\varphi^i(b^{\frac{q^{k}-1}{q-1}}a^{-k})G(\bar{\varphi}^{ki},\chi)G( \varphi^i,\chi)^k, \end{eqnarray*}
where $\varphi$ is a multiplicative character of order $d$ of $\Bbb F_q$ and the last equality holds due to the fact that
$$\sum_{x\in \Bbb F_{q}^{*}}\bar\psi^{d}(x)=\left\{
\begin{array}{ll}
q-1     &      \mbox{if}\ \psi^{d}=\psi_{0},\\
0    &      \mbox{otherwise}.
\end{array} \right. $$

If $d=1$, then
$$T_{(e_{1},e_{2})}(a,b)=\frac{(-1)^{k-1}}{q-1}\sum_{x, z\in \Bbb F_{q}^{*}}\chi(z)G({\psi}_0,\chi)^{k}\bar\psi_0(b^{\frac{q^{k}-1}{q-1}}a^{-k}z^kx)
=1,$$
where $\psi_0$ is the trivial multiplicative character of $\Bbb F_q$.
\end{proof}

\begin{thm} Let $\mathcal C_{(\frac{(q^k-1)e_1}{q-1}}, e_2)$ be a cyclic code defined as (1.1). Suppose that  $\gcd(\frac{q^{k}-1}{q-1},e_{2})=1$, $\gcd(q-1,e_{1},e_{2})=1$, and  $\gcd(q-1,ke_{1}-e_{2})=d$.  Then
$$w_{H}(c(a,b))=\left\{
\begin{array}{ll}
0     &      \mbox{if}\ a=b=0,\\
q^{k}-1    &      \mbox{if}\ a\neq 0\ and\ b=0,\\
q^{k-1}(q-1) & \mbox{if}\ a=0\ and\ b\neq0.
\end{array} \right. $$

If $a\ne0$ and $b\ne0$, then
\begin{eqnarray*}
w_{H}(c(a,b))=\frac{(q^k-1)(q-1)}q-\frac{(-1)^{k-1}}q
\sum_{i=0}^{d-1}\bar\varphi^i(b^{\frac{q^{k}-1}{q-1}}a^{-k})G(\bar{\varphi}^{ki},\chi)G( \varphi^i,\chi)^k,
\end{eqnarray*}
where $\chi$ is a  canonical additive character of $\Bbb F_q$ and $\varphi$ is a multiplicative character of order $d$ of $\Bbb F_q$.
\end{thm}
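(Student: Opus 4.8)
The plan is to build on the two identities already assembled in this section. The opening computation of the section gives $w_{H}(c(a,b)) = q^{k}-1 - Z(a,b)$, and the expansion of $Z(a,b)$ shows that the inner double character sum there is exactly $T_{(e_{1},e_{2})}(a,b)$, so that $Z(a,b) = \frac{q^{k}-1}{q} + \frac1q T_{(e_{1},e_{2})}(a,b)$. Eliminating $Z$ yields $w_{H}(c(a,b)) = \frac{(q^{k}-1)(q-1)}{q} - \frac1q T_{(e_{1},e_{2})}(a,b)$ whenever the right-hand sum is defined, so the theorem reduces to evaluating $T_{(e_{1},e_{2})}(a,b)$ (or $Z(a,b)$ directly) in the four regimes cut out by the vanishing of $a$ and $b$.

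First I would dispose of the degenerate cases and the main analytic case. If $a=b=0$ then $c(a,b)$ is the zero word and $w_{H}=0$. If $a\neq0$ and $b=0$ then every coordinate equals $a\gamma^{\frac{q^{k}-1}{q-1}e_{1}i}$, which is nonzero for all $i$ since $a\neq0$ and $\gamma$ is a unit, so all $q^{k}-1$ coordinates are nonzero and $w_{H}=q^{k}-1$. When $a\neq0$ and $b\neq0$ the work is already done: since $a\in\mathbb{F}_{q}^{*}$ the preceding lemma applies verbatim and supplies the value of $T_{(e_{1},e_{2})}(a,b)$, and substituting it into the displayed formula for $w_{H}$ gives precisely the stated expression. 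This last step is purely formal.

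The genuine remaining case is $a=0$, $b\neq0$, which the preceding lemma does not cover (it assumed $a\in\mathbb{F}_{q}^{*}$). Setting $a=0$, the factor $\chi(ya\cdots)$ becomes trivial, and after running $x=\gamma^{i}$ bijectively over $\mathbb{F}_{q^{k}}^{*}$ and applying orthogonality of $\chi$ over $\mathbb{F}_{q}$, I expect to reach $Z(0,b)=N_{0}$, where $N_{0}=|\{x\in\mathbb{F}_{q^{k}}^{*}:\Tr_{q^{k}/q}(bx^{e_{2}})=0\}|$. It then suffices to prove $N_{0}=q^{k-1}-1$, since that gives $w_{H}=q^{k}-1-(q^{k-1}-1)=q^{k-1}(q-1)$. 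I would compute $N_{0}$ by the standard route: write $N_{0}=\frac{q^{k}-1}{q}+\frac1q\sum_{y\in\mathbb{F}_{q}^{*}}\sum_{x\in\mathbb{F}_{q^{k}}^{*}}\chi'(ybx^{e_{2}})$, expand the inner power sum by Lemma 3.1 over $\mathbb{F}_{q^{k}}$ as $\sum_{j=0}^{g-1}\bar{\lambda}^{j}(yb)G(\lambda^{j},\chi')$, where $\lambda$ is a character of order $g:=\gcd(e_{2},q^{k}-1)$, and then sum over $y\in\mathbb{F}_{q}^{*}$, which by orthogonality annihilates every $\lambda^{j}$ whose restriction to $\mathbb{F}_{q}^{*}$ is nontrivial.

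The crux, and the step I expect to be the main obstacle, is to show that only the trivial character survives this last summation, so that $N_{0}$ collapses to its expected value. This rests on the arithmetic fact $\gcd(g,k)=1$. I would establish it from the hypotheses as follows: first $\gcd(e_{2},q^{k}-1)=\gcd(e_{2},q-1)$ because $\gcd(e_{2},\frac{q^{k}-1}{q-1})=1$; next, writing $m=\frac{q^{k}-1}{q-1}\equiv k\pmod{q-1}$, any prime $p$ dividing both $g$ and $k$ would satisfy $p\mid q-1$ and $p\mid k$, hence $p\mid m$, while $\gcd(m,e_{2})=1$ together with $p\mid g\mid e_{2}$ forces $p\nmid m$, a contradiction. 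Since $\mathbb{F}_{q}^{*}=\langle\gamma^{m}\rangle$ and $\gcd(g,k)=1$, a character of order dividing $g$ is trivial on $\mathbb{F}_{q}^{*}$ only if it is globally trivial; hence the only surviving term is $(q-1)G(\psi_{0}',\chi')=-(q-1)$, giving $N_{0}=\frac{q^{k}-1}{q}-\frac{q-1}{q}=q^{k-1}-1$ and closing the proof.
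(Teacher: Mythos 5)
Your proposal is correct, and its overall skeleton coincides with the paper's proof: both reduce $w_{H}(c(a,b))$ to the evaluation of $T_{(e_{1},e_{2})}(a,b)$ via $Z(a,b)$, dispose of the cases $a=b=0$ and $a\neq 0, b=0$ by direct means, and quote Lemma 3.3 verbatim for $a\neq0$, $b\neq0$. The one place where you genuinely diverge is the case $a=0$, $b\neq0$. The paper handles it with an elementary re-indexing: writing $\Bbb F_{q^{k}}^{*}=\bigcup_{i=0}^{m-1}\gamma^{i}\Bbb F_{q}^{*}$ with $m=\frac{q^{k}-1}{q-1}$, the hypothesis $\gcd(m,e_{2})=1$ means that as $i$ runs over $0,\dots,m-1$ and $y$ over $\Bbb F_{q}^{*}$, the products $\gamma^{ie_{2}}y$ sweep out all of $\Bbb F_{q^{k}}^{*}$ exactly once, so $T_{(e_{1},e_{2})}(0,b)=\sum_{x\in\Bbb F_{q}^{*}}\sum_{z\in\Bbb F_{q^{k}}^{*}}\chi'(bx^{e_{2}}z)=-(q-1)$ by additive orthogonality alone, with no multiplicative characters at all. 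You instead expand $\sum_{x}\chi'(ybx^{e_{2}})$ in Gauss sums attached to multiplicative characters of order $g=\gcd(e_{2},q^{k}-1)$ and show that only the trivial character survives the $y$-summation; this is heavier machinery but equally valid, and it delivers the equivalent statement $Z(0,b)=q^{k-1}-1$. Your arithmetic crux is also sound, but it can be shortened: a character of order $t\mid g$ that is trivial on $\Bbb F_{q}^{*}=\langle\gamma^{m}\rangle$ must satisfy $t\mid m$, and since $t\mid g\mid e_{2}$ while $\gcd(e_{2},m)=1$, this forces $t=1$ immediately — the detour through $\gcd(g,k)=1$ and the congruence $m\equiv k\pmod{q-1}$ is correct but unnecessary. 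In short: same decomposition and same key lemma for the main case; your treatment of the $a=0$ subcase trades the paper's one-line counting trick for a character-theoretic argument that is more work but also more mechanical.
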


\begin{proof}  We have
$$w_{H}(c(a,b))=q^{k}-1-\frac{q^{k}-1}{q}-\frac{1}{q}T_{(e_{1},e_{2})}(a,b).$$

It is obvious that $T_{(e_{1},e_{2})}(0,0)=(q-1)(q^{k}-1)$.

If $a\neq 0$ and $b=0$, we have
\begin{eqnarray*}
T_{(e_{1},e_{2})}(a,0)
=\sum_{x\in \Bbb F_{q^{k}}^{*}}\sum_{y\in \Bbb F_{q^{*}}}\chi(ax^{\frac{q^{k}-1}{q-1}e_{1}}y)
=-(q^{k}-1).
\end{eqnarray*}

If $a= 0$ and $b\neq0$.  There is a coset decomposition of $\Bbb F_{q^{k}}^{*}$:
$$\Bbb F_{q^{k}}^{*}=\bigcup_{i=0}^{\frac{q^{k}-1}{q-1}-1}\gamma^{i}\Bbb F_{q}^{*}.$$
Then  by $\gcd(\frac{q^{k}-1}{q-1},e_{2})=1$
we have
\begin{eqnarray*}
T_{(e_{1},e_{2})}(0,b)&=&\sum_{y\in \Bbb F_{q^{*}}}\sum_{x\in \Bbb F_q^*}\sum_{i=0}^{{\frac{q^k-1}{q-1}}-1}\chi'(byx^{e_{2}}\gamma^{ie_2})\\
&=&\sum_{x\in \Bbb F_q^*}\sum_{y\in \Bbb F_{q^{*}}}\sum_{i=0}^{{\frac{q^k-1}{q-1}}-1}\chi'(bx^{e_{2}}(\gamma^{i}y))\\
&=&\sum_{x\in \Bbb F_q^*}\sum_{z\in \Bbb F_{q^{k}}^{*}}\chi'(bx^{e_{2}}z)=-(q-1).
\end{eqnarray*}

If $a\neq 0$ and $b\neq0$, we get the result by Lemma 3.3 .
\end{proof}

\begin{rem}
  By Theorem 3.4, we have to evaluate Gauss sums to completely determine the weight distribution of $\mathcal{C}_{((\frac{q^{k}-1}{q-1})e_{1},e_{2})}$.  In general, we can do it for some small $d$. If $k=2$ and $d=1$, the weight distribution was given by Vega in \cite{V1}. \end{rem}

\begin{cor}
Let the notations and hypothesis be the same as that in Theorem 3.4. For the minimum Hamming distance $h$ of the cyclic code $\mathcal{C}_{((\frac{q^{k}-1}{q-1})e_{1},e_{2})}$, we have
$$h\geq q^{k-1}(q-1)-1-(d-1)q^{\frac{k-1}{2}}.$$
\end{cor}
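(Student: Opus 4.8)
The plan is to read off the minimum distance directly from the explicit weight formula in Theorem 3.4. The nonzero weights fall into three families: the value $q^k-1$ (when $a\ne0,\,b=0$), the value $q^{k-1}(q-1)$ (when $a=0,\,b\ne0$), and the family indexed by pairs $(a,b)$ with $a\ne0,\,b\ne0$. Since $q^k-1=q^{k-1}(q-1)+q^{k-1}-1\ge q^{k-1}(q-1)-1$, and $q^{k-1}(q-1)$ trivially exceeds the claimed bound, the entire problem reduces to bounding the third family from below.

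For that case I would first isolate the $i=0$ summand in the character sum
$$\sum_{i=0}^{d-1}\bar\varphi^i(b^{\frac{q^{k}-1}{q-1}}a^{-k})G(\bar{\varphi}^{ki},\chi)G(\varphi^i,\chi)^k.$$
When $i=0$ the character $\varphi^0=\psi_0$ is trivial, so $\bar\varphi^0(\cdot)=1$, $G(\bar\varphi^0,\chi)=G(\psi_0,\chi)=-1$, and $G(\varphi^0,\chi)^k=(-1)^k$; hence this term equals $(-1)^{k-1}$. Feeding this back into the weight formula and simplifying $\frac{(q^k-1)(q-1)}{q}-\frac{1}{q}=q^{k-1}(q-1)-1$, I obtain
$$w_{H}(c(a,b))=q^{k-1}(q-1)-1-\frac{(-1)^{k-1}}{q}\sum_{i=1}^{d-1}\bar\varphi^i(b^{\frac{q^{k}-1}{q-1}}a^{-k})G(\bar{\varphi}^{ki},\chi)G(\varphi^i,\chi)^k.$$

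Next I would estimate the surviving sum by the triangle inequality. For $1\le i\le d-1$ the character $\varphi^i$ is nontrivial, since $\varphi$ has order $d$, so $|G(\varphi^i,\chi)|=\sqrt q$ and $|G(\varphi^i,\chi)^k|=q^{k/2}$. The crucial point is that $\bar\varphi^{ki}$ is also nontrivial on this range: it is trivial iff $d\mid ki$, and because $\gcd(k,d)=1$ (established in the introduction) this forces $d\mid i$, which fails for $1\le i\le d-1$; hence $|G(\bar\varphi^{ki},\chi)|=\sqrt q$ as well. Since $a,b\ne0$, the argument $b^{\frac{q^{k}-1}{q-1}}a^{-k}$ is a nonzero element of $\Bbb F_q$, so $|\bar\varphi^i(\cdot)|=1$. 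Each summand therefore has modulus $\sqrt q\cdot q^{k/2}=q^{(k+1)/2}$, and the whole error term is bounded by $\frac{1}{q}(d-1)q^{(k+1)/2}=(d-1)q^{(k-1)/2}$. This gives $w_H(c(a,b))\ge q^{k-1}(q-1)-1-(d-1)q^{(k-1)/2}$; taking the minimum over all nonzero codewords yields the stated bound on $h$.

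The only delicate step is the nontriviality of $\bar\varphi^{ki}$ throughout the range $1\le i\le d-1$; everything else is routine bookkeeping with the modulus identity $|G(\psi,\chi)|=\sqrt q$ for nontrivial $\psi$. The hypothesis $\gcd(k,d)=1$ is exactly what guarantees that all $d-1$ inner Gauss sums attain the full modulus $\sqrt q$ rather than collapsing to $G(\psi_0,\chi)=-1$, so this condition is indispensable to the estimate.
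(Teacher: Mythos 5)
Your proof is correct and is essentially the paper's own argument: both isolate the $i=0$ (trivial character) term of the sum, which contributes exactly $-\tfrac1q$ to the weight, and then bound the remaining $d-1$ terms by $q^{\frac{k+1}{2}}$ each using $|G(\psi,\chi)|=\sqrt q$ for nontrivial $\psi$, yielding $w_H(c(a,b))\ge q^{k-1}(q-1)-1-(d-1)q^{\frac{k-1}{2}}$. One small quibble: your closing claim that $\gcd(k,d)=1$ is \emph{indispensable} is overstated --- if some $\bar\varphi^{ki}$ were trivial its Gauss sum would have modulus $1\le\sqrt q$, making that term even smaller and the bound only stronger, which is presumably why the paper does not bother to check nontriviality at all.
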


\begin{proof}
For a trivial multiplicative $\psi_{0}$, we know that $G(\psi_{0},\chi)=-1$. And for $\psi\neq \psi_{0}$, $|G(\psi,\chi)|=q^{1/2}$. Therefore, for $a\neq 0$, $b\neq 0$, by Theorem 3.4,
\begin{eqnarray*}
& &|\frac{(-1)^{k-1}}q
\sum_{i=0}^{d-1}\bar\varphi^i(b^{\frac{q^{k}-1}{q-1}}a^{-k})G(\bar{\varphi}^{ki},\chi)G( \varphi^i,\chi)^k|\\
&=&\frac{1}{q}|1+\sum_{i=1}^{d-1}\bar\varphi^i(b^{\frac{q^{k}-1}{q-1}}a^{-k})G(\bar{\varphi}^{ki},\chi)G( \varphi^i,\chi)^k|\\
&\leq& \frac{1}{q}(1+(d-1)q^{\frac{k+1}{2}}).
\end{eqnarray*}
Hence, $$w_{H}(c(a,b))\geq q^{k-1}(q-1)-1-(d-1)q^{\frac{k-1}{2}}.$$
\end{proof}

\section{Weight distributions of $\mathcal{C}_{((\frac{q^{k}-1}{q-1})e_{1},e_{2})}$ for some small $d$}
\subsection{$d=1$}
In this subsection, we   show that the $\mathcal{C}_{((\frac{q^{k}-1}{q-1})e_{1},e_{2})}$ is a three-weight optimal cyclic code with respect to the Griesmer bound if $d=1$, which  generalizes a Vega's result \cite{V1} from $k=2$ to  arbitrary positive integer $k\geq 2$.

Let $n_{q}(l,h)$ be the minimum length $n$ for which an $[n,l,h]$ linear code over $\Bbb F_{q}$ exists. The well-known Griesmer lower bound is given in the following.

\begin{lem}(Griesmer bound)
$$n_{q}(l,h)\geq \sum_{i=0}^{l-1}\lceil \frac{h}{q^{i}}\rceil.$$
\end{lem}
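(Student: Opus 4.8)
The plan is to prove the Griesmer bound by induction on the dimension $l$, using the standard device of \emph{residual codes}, which reduces a code of dimension $l$ and distance $h$ to one of dimension $l-1$ and distance at least $\lceil h/q\rceil$. It suffices to show that every $[n,l,h]$ linear code over $\Bbb F_q$ satisfies $n\geq\sum_{i=0}^{l-1}\lceil h/q^{i}\rceil$, since this then bounds the minimal admissible length $n_q(l,h)$. For the base case $l=1$ the right-hand side collapses to $\lceil h/q^{0}\rceil=h$, and any $[n,1,h]$ code is generated by a single vector of weight $h$, so trivially $n\geq h$.

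For the inductive step, let $\mathcal{C}$ be an $[n,l,h]$ code and let $c\in\mathcal{C}$ be a codeword of minimum weight $h$. After permuting coordinates and rescaling columns (which alters none of $n$, $l$, $h$), I may assume $c=(1,\ldots,1,0,\ldots,0)$ with exactly $h$ leading ones. Let $\mathcal{C}'$ be the residual code obtained by puncturing $\mathcal{C}$ on the $h$ coordinates in $\mathrm{supp}(c)$, a code of length $n-h$. I would then establish two facts: (i) $\dim\mathcal{C}'=l-1$, and (ii) the minimum distance of $\mathcal{C}'$ is at least $\lceil h/q\rceil$. Granting these, the induction hypothesis gives $n-h\geq\sum_{i=0}^{l-2}\lceil\,\lceil h/q\rceil/q^{i}\rceil$, and the elementary ceiling identity $\lceil\,\lceil h/q\rceil/q^{i}\rceil=\lceil h/q^{i+1}\rceil$ turns the right-hand side into $\sum_{j=1}^{l-1}\lceil h/q^{j}\rceil$; adding the term $h=\lceil h/q^{0}\rceil$ yields exactly $n\geq\sum_{j=0}^{l-1}\lceil h/q^{j}\rceil$.

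Fact (i) follows once I show the kernel of the puncturing map $\mathcal{C}\to\Bbb F_q^{\,n-h}$ is precisely $\langle c\rangle$: if a codeword $c'$ is supported inside $\mathrm{supp}(c)$ and has some nonzero entry $\lambda$ there, then $c'-\lambda c$ is again supported inside $\mathrm{supp}(c)$ but vanishes at that position, hence has weight at most $h-1$, forcing $c'=\lambda c$ by minimality of $h$. The real work is fact (ii), and I expect it to be the main obstacle. Here I would fix a nonzero $\bar v\in\mathcal{C}'$ lifting to some $v\in\mathcal{C}$, and count, for each $\lambda\in\Bbb F_q$, the number $a_\lambda$ of positions $j\in\mathrm{supp}(c)$ with $v_j=\lambda$; since $\sum_{\lambda}a_\lambda=h$ over the $q$ values of $\lambda$, pigeonhole yields some $\lambda^{*}$ with $a_{\lambda^{*}}\geq\lceil h/q\rceil$. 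The codeword $v-\lambda^{*}c$ agrees with $v$ off $\mathrm{supp}(c)$ and is nonzero (otherwise $\bar v=0$), so its weight is at least $h$; since its weight on $\mathrm{supp}(c)$ is only $h-a_{\lambda^{*}}$, the off-support weight $\mathrm{wt}(\bar v)$ must absorb at least $a_{\lambda^{*}}\geq\lceil h/q\rceil$, which is (ii). Assembling (i), (ii), the induction hypothesis, and the ceiling identity completes the argument.
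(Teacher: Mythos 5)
Your proposal is correct, but there is nothing in the paper to compare it against: the paper states the Griesmer bound (Lemma 4.1) as a classical known result, citing Griesmer's original paper, and gives no proof at all — it only uses the bound as a yardstick to certify the optimality of the codes in Theorem 4.2. What you have written is the standard residual-code induction proof, and its essential points are all in place: the kernel of the puncturing map is exactly $\langle c\rangle$ (your rescaling argument showing any codeword supported in $\mathrm{supp}(c)$ is a multiple of $c$ is the right one), so the residual code has dimension $l-1$; the pigeonhole count $a_{\lambda^{*}}\geq\lceil h/q\rceil$ combined with $\mathrm{wt}(v-\lambda^{*}c)\geq h$ forces the residual code's minimum distance to be at least $\lceil h/q\rceil$; and the identity $\lceil\lceil h/q\rceil/q^{i}\rceil=\lceil h/q^{i+1}\rceil$ telescopes the induction correctly. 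Two minor points you pass over silently, both harmless: the induction hypothesis is applied to the residual code, whose true minimum distance $h'$ may strictly exceed $\lceil h/q\rceil$, so you implicitly use the (obvious) monotonicity of $\sum_{i}\lceil h'/q^{i}\rceil$ in $h'$; and the statement about $n_{q}(l,h)$ follows because your inequality holds for every $[n,l,h]$ code, in particular one of minimal length, which you do note at the outset. In short, your argument is complete and supplies a proof that the paper deliberately omitted.
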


\begin{thm}
Let $\gcd(\frac{q^{k}-1}{q-1},e_{2})=1$ and $\mathcal{C}_{((\frac{q^{k}-1}{q-1})e_{1},e_{2})}$ be defined as (1.1).

If $\gcd(q-1,ke_{1}-e_{2})=1$,  then
$\mathcal{C}_{((\frac{q^{k}-1}{q-1})e_{1},e_{2})}$ is a three-weight $[q^{k}-1,k+1,q^{k-1}(q-1)-1]$ optimal cyclic code over $\Bbb F_{q}$ with respect to the Griesmer bound. Its weight distribution is given in Table 1.

Moreover, let $\gcd(q-1, e_1,e_2)=1$. Then  it is optimal only if  $\gcd(q-1,ke_{1}-e_{2})=1$.
\[ \small{\begin{tabular} {c} Table 1. Weight distribution of the code in Theorem 4.2\\
{\begin{tabular}{cc}
  \hline
 weight & Frequency\\
  \hline
  $0$ &  1\\
  $q^{k-1}(q-1)-1$ & $(q-1)(q^{k}-1)$\\
  $q^{k-1}(q-1)$ & $q^{k}-1$\\
  $q^{k}-1$ & $q-1$\\
  \hline
\end{tabular}}
\end{tabular}}
\]
\end{thm}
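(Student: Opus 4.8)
The plan is to read the weight distribution straight off Theorem 3.4 with $d=1$, verify that the resulting parameters meet the Griesmer bound with equality, and then obtain the converse by isolating the non-principal Gauss-sum contribution when $d\ge 2$. First I specialize: by Lemma 3.3, $d=1$ forces $T_{(e_1,e_2)}(a,b)=1$ whenever $a\ne0$ and $b\ne0$, so Theorem 3.4 gives
$$w_H(c(a,b))=\frac{(q^k-1)(q-1)}{q}-\frac{(-1)^{k-1}}{q}=\frac{(q^k-1)(q-1)-1}{q}=q^{k-1}(q-1)-1.$$
Combined with the three exceptional cases of Theorem 3.4 this produces exactly the three nonzero weights $q^{k-1}(q-1)-1<q^{k-1}(q-1)<q^k-1$, the strict inequalities holding for $k\ge2$. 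For the frequencies I note that the $\Bbb F_q$-linear map $(a,b)\mapsto c(a,b)$ from $\Bbb F_q\times\Bbb F_{q^k}$ onto $\mathcal{C}$ is a surjection of $\Bbb F_q$-spaces of equal dimension $k+1$, hence a bijection; so the four cases $a=b=0$, $a\ne0=b$, $a=0\ne b$, $a\ne0\ne b$ are realized by $1$, $q-1$, $q^k-1$, and $(q-1)(q^k-1)$ codewords respectively, which is Table 1. In particular the minimum distance is $h=q^{k-1}(q-1)-1$.

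Next I verify optimality. With $n=q^k-1$, $l=k+1$ and $h=q^k-q^{k-1}-1$ I evaluate the Griesmer sum of Lemma 4.1. Writing $h/q^i=q^{k-i}-q^{k-1-i}-q^{-i}$, the term $i=0$ equals $h$; for $1\le i\le k-1$ the term is $q^{k-i}-q^{k-1-i}$, since we subtract the fraction $q^{-i}\in(0,1)$ from an integer and the ceiling rounds up; and for $i=k$ the term is $1$ because $1-q^{-1}-q^{-k}\in(0,1)$. The middle terms telescope to $q^{k-1}-1$, so the total is $(q^k-q^{k-1}-1)+(q^{k-1}-1)+1=q^k-1=n$. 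The Griesmer bound is thus attained with equality and $\mathcal{C}$ is optimal.

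For the converse I argue contrapositively under the standing hypotheses (which give $\gcd(k,d)=1$). The same computation with $h'=q^{k-1}(q-1)$ yields $\sum_{i=0}^{k}\lceil h'/q^i\rceil=q^k>n$, so every $[q^k-1,k+1]$ code meeting the bound must have minimum distance precisely $q^{k-1}(q-1)-1$; it therefore suffices to produce, when $d\ge2$, a codeword of strictly smaller weight. Setting $u=b^{(q^k-1)/(q-1)}a^{-k}$, Theorem 3.4 reads $w_H(c(a,b))=q^{k-1}(q-1)-1-\Delta(u)$ with
$$\Delta(u)=\frac{(-1)^{k-1}}{q}\sum_{i=1}^{d-1}\bar\varphi^i(u)\,G(\bar\varphi^{ki},\chi)\,G(\varphi^i,\chi)^k,$$
the $i=0$ term having been absorbed into the $-1$. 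As $b$ ranges over $\Bbb F_{q^k}^*$ with $a$ fixed, $u$ sweeps all of $\Bbb F_q^*$ because the norm is surjective. Since $\gcd(k,d)=1$, for $1\le i\le d-1$ both $\varphi^i$ and $\varphi^{ki}$ are nontrivial, so $|G(\varphi^i,\chi)|=|G(\bar\varphi^{ki},\chi)|=\sqrt q\ne0$ and each coefficient of the distinct nontrivial characters $\bar\varphi^i$ in $\Delta$ is nonzero; hence $\Delta$ is not identically zero. By orthogonality $\sum_{u\in\Bbb F_q^*}\bar\varphi^i(u)=0$, so $\Delta$ has mean zero over $u$. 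Being real-valued (as $q^{k-1}(q-1)-1-w_H$ is an integer) and not identically zero, $\Delta$ is strictly positive for some $u$, giving a codeword of weight $<q^{k-1}(q-1)-1$; so the code is not optimal.

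I expect the converse to be the main obstacle: one must guarantee that $\Delta$ actually takes a positive value, not merely a nonzero one. The device that settles this is the mean-zero observation together with linear independence of the characters $\bar\varphi^i$, and the role of $\gcd(k,d)=1$ is exactly to prevent the coefficient $G(\bar\varphi^{ki},\chi)$ from collapsing to the trivial-character value $-1$ (which would wreck the nonvanishing of coefficients). The only other point needing care is that $u=b^{(q^k-1)/(q-1)}a^{-k}$ really ranges over all of $\Bbb F_q^*$, which follows from surjectivity of the norm.
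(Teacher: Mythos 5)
Your proof is correct. The $d=1$ half --- reading the weights off Lemma 3.3 and Theorem 3.4 and checking that the Griesmer sum telescopes to $q^{k}-1$ --- is exactly the paper's argument. For the converse, however, you take a genuinely different route. The paper picks representatives $c_j=b_j^{(q^k-1)/(q-1)}a_j^{-k}$ so that $\bar\varphi(c_j)$ runs over the $d$-th roots of unity, inverts the character matrix $M=(\bar\varphi^i(c_j))$ to express each product $G(\bar\varphi^{ik},\chi)G(\varphi^i,\chi)^k$ in terms of the integers $t_j=(-1)^{k-1}T_{(e_1,e_2)}(a_j,b_j)$, and uses $|G(\bar\varphi^{ik},\chi)G(\varphi^i,\chi)^k|=q^{(k+1)/2}$ (this is where $\gcd(k,d)=1$ really enters) to obtain $\sum_j|t_j|\ge 1+(d-1)q^{(k+1)/2}$; combined with $\sum_j t_j=(-1)^{k-1}d$ and integrality, this yields indices with $t_{j_1}>1$ and $t_{j_2}<-1$, forcing the minimum distance below $q^{k-1}(q-1)-1$ and hence the Griesmer sum strictly below $q^k-1$. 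You instead absorb the $i=0$ term into the constant $-1$ and handle the remainder $\Delta$ with three soft facts: orthogonality (mean zero over $\Bbb F_{q}^{*}$), linear independence of the distinct nontrivial characters $\bar\varphi^i$ with nonvanishing coefficients (so $\Delta\not\equiv 0$), and integrality (so some $\Delta(u)\ge 1$). Your version is shorter and avoids both the matrix inversion and any magnitude computation: once the trivial term is split off, mere positivity suffices, whereas the paper, keeping that term inside $t_j$, must push values strictly beyond $\pm 1$ and therefore needs the quantitative bound. What the paper's heavier computation buys in exchange is the size of the deviation (of order $q^{(k-1)/2}$, in both directions), which is precisely the quantity it goes on to evaluate exactly in the $d=2,3,4$ sections. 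One small correction to your closing commentary: $\gcd(k,d)=1$ is not what saves the coefficients from vanishing, since even a trivial character gives $G(\psi_0,\chi)=-1\neq 0$; your use of the coprimality (to guarantee $|G(\bar\varphi^{ki},\chi)|=\sqrt q$) is legitimate but not actually indispensable for the nonvanishing step.
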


\begin{proof}  If $d=\gcd(q-1,ke_1-e_2)=1$, then $\gcd(q-1,e_1,e_2)=1$ and  by Lemma 3.3 $T_{e_1,e_2}(a,b)=1$ with $a\ne 0$ and $b\ne 0$. Hence $w_H(c(a,b)=q^k-q^{k-1}-1$ for $a\ne 0$ and $b\ne 0$. By Theorem 3.4, we have the weight distribution in Table I. We  know that the minimal distance $h$ of $\mathcal{C}_{((\frac{q^{k}-1}{q-1})e_{1},e_{2})}$ is equal to $q^k-q^{k-1}-1$. It is clear that $$q^k-1=\sum_{i=0}^{k}\lceil \frac{h}{q^{i}}\rceil.$$
Therefore, it is a three-weight optimal cyclic code by Lemma 4.1.

Moreover, let $\gcd(q-1,e_1,e_2)=1$, then the length of the code is $q^k-1$. Suppose that $\gcd(q-1, ke_1-e_{2})=d>1$. If $a\neq 0,b\neq 0$, by Lemma 3.3,
$$ T_{(e_{1},e_{2})}(a,b)=
(-1)^{k-1}\sum_{i=0}^{d-1}\bar\varphi^i(b^{\frac{q^{k}-1}{q-1}}a^{-k})G(\bar{\varphi}^{ki},\chi)G( \varphi^i,\chi)^k$$ with $\varphi$ a multiplicative character of order $d$.
Since the norm mapping $N: \Bbb F_{q^k}^*\rightarrow \Bbb F_q^*$ is surjective, there are elements $c_j=b_j^{\frac{q^{k}-1}{q-1}}a_j^{-k}\in \Bbb F_q$ ($b_{j}\in \Bbb F_{q^{k}}^{*},a_{j}\in \Bbb F_{q}$) such that $\bar\varphi(c_j)=\zeta^j$, $j=0,\ldots, d-1$, where $\zeta$ is a  $d$-th primitive  root of unity.
 Consider the system of equations:
 $$M\left(\begin{array}{c} G(\bar \varphi^{0k}, \chi)G(\varphi^0,\chi)^k\\ \vdots \\G(\bar \varphi^{(d-1)k}, \chi)G(\varphi^{d-1},\chi)^k\end{array}\right)=\left(\begin{array}{c}t_0\\ \vdots \\ t_{d-1}\end{array}\right)$$
where $M=(\bar{\varphi}^i(c_j))_{j,i=0,\ldots, d-1}$ ($j$ is the row index, $i$ is the column index) is an invertible character matrix and $t_j\in \Bbb Z, j=0,\ldots, d-1$.
In fact, $T_{(e_1,e_2)}(a_j,b_j)$, $j=0,\ldots, d-1$,  are both algebraic integral numbers and rational numbers, so they are integral numbers. In the following, we prove that there exist two numbers $j_1, j_2$ such that $t_{j_1}>1$, $t_{j_2}<-1$.

It is clear that $$\sum_{j=0}^{d-1}T_{(e_1,e_2)}(a_j,b_j)=d,\mbox{ i.e. }  \sum_{j=0}^{d-1}t_j=(-1)^{k-1}d. $$
 On the other hand, $$\left(\begin{array}{c} G(\bar \varphi^{0k}, \chi)G(\varphi^0,\chi)^k\\ \vdots \\G(\bar \varphi^{(d-1)k}, \chi)G(\varphi^{d-1},\chi)^k\end{array}\right)=M^{-1}\left(\begin{array}{c}t_0\\ \vdots \\ t_{d-1}\end{array}\right),$$
 where $M^{-1}=\frac 1d(\bar{\varphi}^i(c_j^{-1}))_{i,j=0,\ldots, d-1}$.
 Since $\gcd(k,d)=1$, we have $|G(\bar \varphi^{ik}, \chi)G(\varphi^{i},\chi)^k|= q^{\frac{k+1}2}, i=1,\ldots, d-1$, and $$\sum_{i=0}^{d-1}|G(\bar \varphi^{ik}, \chi)G(\varphi^{i},\chi)^k|=1+(d-1)q^{\frac{k+1}2}\le \frac 1d \sum_{i,j=0}^{d-1}|\bar{\varphi}^i(c_j^{-1})t_j|.$$
 Then $\sum_{j=0}^{d-1}|t_j|\ge 1+(d-1)q^{\frac{k+1}2}\geq1+q>d$.

 Hence
 there exist $j_1$ and $j_2$ such that $t_{j_1}>1$ and $t_{j_2}<-1$.

By Theorem 4.4 and the discussion above, the minimal distance $h$ of $\mathcal C$ must be $q^k-q^{k-1}-A$, where $A>1$.  Then
\begin{eqnarray*}\sum_{i=0}^{k}\lceil \frac{h}{q^{i}}\rceil&=&q^k-q^{k-1}-A+\sum_{i=1}^{k}\lceil \frac{q^k-q^{k-1}}{q^{i}}\rceil+\sum_{i=1}^{k}\lceil \frac{-A}{q^{i}}\rceil\\
&=& q^{k}-A+\sum_{i=1}^{k}\lceil \frac{-A}{q^{i}}\rceil\leq q^{k}-A <q^{k}-1.
\end{eqnarray*}
The proof is completed.
\end{proof}

\begin{rem} In Theorem 4.2, we generalize a Vega's result from $k=2$ to arbitrary positive integer $k$.  Moreover, by means of Table 1 and the first four identities of Pless \cite{HP}, we can deduce that the dual of the cyclic code in Theorem 4.2 is projective with minimum Hamming distance $d^{\perp}=3$.
\end{rem}

\begin{exa}
Let $q=4,k=3,e_{1}=e_{2}=1$, by a Magma experiment, we obtain that $\mathcal{C}_{((\frac{q^{k}-1}{q-1})e_{1},e_{2})}$ is a $[63,4,47]$ optimal three-weight cyclic code with weight enumerator
$$1+189z^{47}+63z^{48}+3z^{63}.$$
And its dual is a $[63,59,3]$ cyclic code which has the same parameters as the best known linear codes according to \cite{G}. This coincides with the result given by Theorem 4.2.
\end{exa}

\begin{exa}
Let $q=3,k=4,e_{1}=1,e_{2}=3$, by a Magma experiment, we obtain that $\mathcal{C}_{((\frac{q^{k}-1}{q-1})e_{1},e_{2})}$ is a $[80,5,53]$ optimal three-weight cyclic code with weight enumerator
$$1+160z^{53}+80z^{54}+2z^{80}.$$
And its dual is a $[80,75,3]$ cyclic code which has the same parameters as the best known linear codes according to \cite{G}. This coincides with the result given by Theorem 4.2.
\end{exa}

\subsection{$d=2$}
In this subsection, we determine the weight distribution of $\mathcal{C}_{((\frac{q^{k}-1}{q-1})e_{1},e_{2})}$ for $d=2$. Since $\gcd(q-1,ke_{1}-e_{2})=2$, we have that $q$ is odd. Due to $\gcd(\frac{q^{k}-1}{q-1},e_{2})=1$, we have that $k\equiv 1\pmod{2}$. By Lemmas 2.1 and 3.3, for $a\neq 0,b\neq 0$,
\begin{eqnarray*}T_{(e_{1},e_{2})}(a,b)&=&\sum_{i=0}^{1}\bar\varphi^i(b^{\frac{q^{k}-1}{q-1}}a^{-k})
G(\bar{\varphi}^{ki},\chi)G( \varphi^i,\chi)^k\\
&=&1+\varphi(b^{\frac{q^{k}-1}{q-1}}a^{-k})G(\varphi,\chi)^{k+1}\\
&=&1+\varphi(b^{\frac{q^{k}-1}{q-1}}a^{-k})(\sqrt{(p^{*})^{e}})^{k+1},
\end{eqnarray*}
where  $\varphi$ is of order 2. Let $C_{i}^{(2)},i=0,1$, be the cyclotomic classes of order 2 of $\Bbb F_{q}$. If $b^{\frac{q^{k}-1}{q-1}}a^{-k}\in C_{0}^{(2)}$, we have
$$T_{(e_{1},e_{2})}(a,b)=1+(\sqrt{(p^{*})^{e}})^{k+1}$$
which occurs $(q-1)(q^{k}-1)/2$ times. If $b^{\frac{q^{k}-1}{q-1}}a^{-k}\in C_{1}^{(2)}$, we have
$$T_{(e_{1},e_{2})}(a,b)=1-(\sqrt{(p^{*})^{e}})^{k+1}$$
which occurs $(q-1)(q^{k}-1)/2$ times. Then by Theorem 3.4, the weight distribution follows.
\begin{thm}
For $q=p^{e}$, let $\gcd(q-1,e_{1},e_{2})=1,\gcd(\frac{q^{k}-1}{q-1},e_{2})=1$ and $\mathcal{C}_{((\frac{q^{k}-1}{q-1})e_{1},e_{2})}$ be defined as (1.1). If $\gcd(q-1,ke_{1}-e_{2})=2$,  then
$\mathcal{C}_{((\frac{q^{k}-1}{q-1})e_{1},e_{2})}$ is a four-weight $[q^{k}-1,k+1]$ cyclic code and its weight distribution is given in Table 2, where $p^{*}=(-1)^{\frac{p-1}{2}}p$.

\[ \small{\begin{tabular} {c} Table 2. Weight distribution of the code in Theorem 4.6\\
{\begin{tabular}{cc}
  \hline
 weight & Frequency\\
  \hline
  $0$ &  1\\
  $q^{k-1}(q-1)-1+\frac{(\sqrt{(p^{*})^{e}})^{k+1}}{q}$ & $(q-1)(q^{k}-1)/2$\\
  $q^{k-1}(q-1)-1-\frac{(\sqrt{(p^{*})^{e}})^{k+1}}{q}$ & $(q-1)(q^{k}-1)/2$\\
  $q^{k-1}(q-1)$ & $q^{k}-1$\\
  $q^{k}-1$ & $q-1$\\
  \hline
\end{tabular}}
\end{tabular}}
\]
\end{thm}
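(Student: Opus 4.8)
The plan is to derive the entire statement directly from Theorem~3.4 and Lemma~3.3 by specializing to $d=2$, so the real content is just an explicit evaluation of the character sum $T_{(e_1,e_2)}(a,b)$ together with a counting argument. First I would establish the two arithmetic constraints that make the $d=2$ case work. From $\gcd(q-1,ke_1-e_2)=2$ we immediately get $2\mid q-1$, hence $q$ is odd and the quadratic character $\eta=\varphi$ of order $2$ exists on $\Bbb F_q$. The condition $\gcd(\frac{q^k-1}{q-1},e_2)=1$ forces $k$ to be odd: since $\frac{q^k-1}{q-1}=1+q+\cdots+q^{k-1}\equiv k\pmod 2$ when $q$ is odd, if $k$ were even this sum would be even and could not be coprime to $e_2$ in the way required (more precisely, one checks $2\mid\gcd(\frac{q^k-1}{q-1},e_2)$ would be forced, contradicting coprimality). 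Recording $k\equiv 1\pmod 2$ is what lets me combine the Gauss-sum exponents below.

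Next I would plug $d=2$ into Lemma~3.3. The sum runs over $i=0,1$; the $i=0$ term is $\bar\varphi^0(\cdot)G(\bar\varphi^0,\chi)G(\varphi^0,\chi)^k=(-1)(-1)^k=(-1)^{k+1}$, and since $k$ is odd this equals $1$ after the global $(-1)^{k-1}=1$ prefactor. For $i=1$, because $\varphi$ has order $2$ we have $\bar\varphi=\varphi$ and $\varphi^{k}=\varphi$ (as $k$ is odd), so $G(\bar\varphi^{k},\chi)G(\varphi,\chi)^k=G(\varphi,\chi)^{k+1}$. Thus
\begin{equation*}
T_{(e_1,e_2)}(a,b)=1+\varphi\!\left(b^{\frac{q^k-1}{q-1}}a^{-k}\right)G(\varphi,\chi)^{k+1}.
\end{equation*}
Now I invoke Lemma~2.1 with $\eta=\varphi$ to substitute $G(\varphi,\chi)=(-1)^{e-1}\sqrt{(p^*)^e}$; since $k+1$ is even the factor $(-1)^{(e-1)(k+1)}=1$, leaving $G(\varphi,\chi)^{k+1}=\bigl(\sqrt{(p^*)^e}\,\bigr)^{k+1}$, exactly the quantity appearing in Table~2. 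This gives the two values $T=1\pm\bigl(\sqrt{(p^*)^e}\,\bigr)^{k+1}$ according to whether $b^{\frac{q^k-1}{q-1}}a^{-k}$ is a square ($\varphi=+1$) or a nonsquare ($\varphi=-1$).

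The counting step is the part requiring a little care, and it is where I expect the main bookkeeping obstacle. I must show that as $(a,b)$ ranges over $\Bbb F_q^*\times\Bbb F_{q^k}^*$, the element $c:=b^{\frac{q^k-1}{q-1}}a^{-k}$ lands in each cyclotomic class $C_0^{(2)}$ and $C_1^{(2)}$ exactly $(q-1)(q^k-1)/2$ times. The key observation is that $b\mapsto b^{\frac{q^k-1}{q-1}}=N_{q^k/q}(b)$ is the norm map, which is surjective onto $\Bbb F_q^*$ and each fiber has size $\frac{q^k-1}{q-1}$; combined with the $a^{-k}$ factor ranging over $\Bbb F_q^*$ (a bijection of $\Bbb F_q^*$ up to the $\gcd(k,q-1)$-fold covering), the resulting distribution of $c$ over $\Bbb F_q^*$ is uniform, so each of the two equal-sized classes $C_0^{(2)},C_1^{(2)}$ receives exactly half of the $(q-1)(q^k-1)$ pairs. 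I would phrase this as: the total number of pairs is $(q-1)(q^k-1)$, and since $\varphi(c)$ takes the values $+1$ and $-1$ symmetrically under this uniform distribution, each weight occurs $(q-1)(q^k-1)/2$ times. Finally I assemble the full table by importing the three degenerate rows ($0$, $q^k-1$ with frequency $q-1$, and $q^{k-1}(q-1)$ with frequency $q^k-1$) verbatim from Theorem~3.4, and converting the two values of $T$ into weights via $w_H(c(a,b))=\frac{(q^k-1)(q-1)}{q}-\frac1q T_{(e_1,e_2)}(a,b)$, which yields $q^{k-1}(q-1)-1\mp\frac1q\bigl(\sqrt{(p^*)^e}\,\bigr)^{k+1}$. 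That these four weights are genuinely distinct (so the code is four-weight and not fewer) follows because $\bigl(\sqrt{(p^*)^e}\,\bigr)^{k+1}=q^{(k+1)/2}\neq 0$ separates the two nondegenerate weights from each other and from $q^{k-1}(q-1)$.
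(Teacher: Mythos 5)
Your proposal is correct and follows essentially the same route as the paper: derive $q$ odd and $k$ odd from the gcd hypotheses, specialize Lemma~3.3 to $d=2$ so that $T_{(e_1,e_2)}(a,b)=1+\varphi\bigl(b^{\frac{q^k-1}{q-1}}a^{-k}\bigr)G(\varphi,\chi)^{k+1}$, evaluate the quadratic Gauss sum by Lemma~2.1, split into the two cyclotomic classes, and convert to weights via Theorem~3.4. The only difference is that you spell out the counting step (surjectivity and equal fiber sizes of the norm map giving the uniform distribution of $b^{\frac{q^k-1}{q-1}}a^{-k}$) and the distinctness of the four weights, both of which the paper asserts without justification.
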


\begin{exa}
Let $q=3,k=3,e_{1}=e_{2}=1$, by a Magma experiment, we obtain that $\mathcal{C}_{((\frac{q^{k}-1}{q-1})e_{1},e_{2})}$ is a $[26,4,14]$ four-weight cyclic code with weight enumerator
$$1+26z^{14}+26z^{18}+26z^{20}+2z^{26}.$$
This coincides with the result given by Theorem 4.6.
\end{exa}

\begin{exa}
Let $q=9,k=3,e_{1}=e_{2}=1$, by a Magma experiment, we obtain that $\mathcal{C}_{((\frac{q^{k}-1}{q-1})e_{1},e_{2})}$ is a $[728,4,638]$ four-weight cyclic code with weight enumerator
$$1+2912z^{638}+728z^{648}+2912z^{656}+8z^{728}.$$
This coincides with the result given by Theorem 4.6.
\end{exa}
\subsection{$d=3$}
In this subsection, we determine the weight distribution of $\mathcal{C}_{((\frac{q^{k}-1}{q-1})e_{1},e_{2})}$ for $d=3$. Since $\gcd(q-1,ke_{1}-e_{2})=3$ and $\gcd(\frac{q^{k}-1}{q-1},e_{2})=1$, we have that $k\not \equiv 0\pmod{3}$.

\begin{lem}
Let $k\geq 2$ be a positive integer and $e_{1},e_{2}$  positive integers such that $\gcd(\frac{q^{k}-1}{q-1},e_{2})=1$ and $(q-1, ke_1-e_{2})=3$. Let $4q=A^{2}+27B^{2}$ with $A\equiv 1\pmod{3}$. Let $A=2a-b,B=b/3$. For $a\neq 0,b\neq 0$, we have the following results.

(1) If $k\equiv 1\pmod{3}$, then
$$ T_{(e_{1},e_{2})}(a,b)=\left\{
\begin{array}{ll}
1+2q^{\frac{k-1}{3}+1}(-1)^{k-1}Re((a+b\omega)^{\frac{k-1}{3}}),   &      \frac{(q-1)(q^{k}-1)}{3}\ times,\\
1+2q^{\frac{k-1}{3}+1}(-1)^{k-1}Re(\omega(a+b\omega)^{\frac{k-1}{3}}),   &      \frac{(q-1)(q^{k}-1)}{3}\ times,\\
1+2q^{\frac{k-1}{3}+1}(-1)^{k-1}Re(\omega^{2}(a+b\omega)^{\frac{k-1}{3}}),   &      \frac{(q-1)(q^{k}-1)}{3}\ times.
\end{array} \right. $$

(2) If $k\equiv 2\pmod{3}$, then
$$ T_{(e_{1},e_{2})}(a,b)=\left\{
\begin{array}{ll}
1+2q^{\frac{k-2}{3}+1}(-1)^{k-1}
Re((a+b\omega)^{\frac{k-2}{3}+1}),   &      \frac{(q-1)(q^{k}-1)}{3}\ times,\\
1+2q^{\frac{k-2}{3}+1}(-1)^{k-1}
Re(\omega(a+b\omega)^{\frac{k-2}{3}+1}),   &      \frac{(q-1)(q^{k}-1)}{3}\ times,\\
1+2q^{\frac{k-2}{3}+1}(-1)^{k-1}
Re(\omega^{2}(a+b\omega)^{\frac{k-2}{3}+1}),   &      \frac{(q-1)(q^{k}-1)}{3}\ times.
\end{array} \right. $$
\end{lem}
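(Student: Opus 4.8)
The plan is to specialize the general evaluation of Lemma 3.3 to $d=3$ and then compute the two nontrivial summands explicitly from the cubic structure of the Gauss sums. Note first that $\gcd(k,d)=1$ forces $k\not\equiv 0\pmod 3$, and $d=3$ forces $3\mid q-1$, so $q\equiv 1\pmod 3$ and Lemmas 2.2 and 2.3 apply. Writing $c=b^{\frac{q^{k}-1}{q-1}}a^{-k}$ and isolating the term $i=0$ in Lemma 3.3, which equals $(-1)^{k-1}G(\psi_0,\chi)G(\psi_0,\chi)^k=(-1)^{k-1}(-1)(-1)^k=1$, I would reduce to
$$T_{(e_{1},e_{2})}(a,b)=1+(-1)^{k-1}\big[\bar\varphi(c)G(\bar\varphi^{k},\chi)G(\varphi,\chi)^k+\bar\varphi^{2}(c)G(\bar\varphi^{2k},\chi)G(\varphi^{2},\chi)^k\big].$$

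Next I set $g=G(\varphi,\chi)$ and record the cubic facts: since $\varphi$ has order $3$, one has $\bar\varphi=\varphi^{2}$, and $\varphi(-1)=1$ (as $\varphi(-1)$ is simultaneously a square root and a cube root of unity), hence $G(\bar\varphi,\chi)=\bar g$ and $g\bar g=q$; moreover Lemma 2.2 together with Lemma 2.3 gives $g^{3}=qJ(\varphi,\varphi)=q(a+b\omega)$. I would then reduce the exponents modulo $3$. If $k\equiv 1\pmod 3$, set $m=\frac{k-1}{3}$; then $\bar\varphi^{k}=\bar\varphi$, $\bar\varphi^{2k}=\varphi$, and $g^{k}=(g^{3})^{m}g=q^{m}(a+b\omega)^{m}g$, so that
$$G(\bar\varphi^{k},\chi)G(\varphi,\chi)^k=q^{m+1}(a+b\omega)^{m},\qquad G(\bar\varphi^{2k},\chi)G(\varphi^{2},\chi)^k=q^{m+1}\overline{(a+b\omega)^{m}}.$$
If $k\equiv 2\pmod 3$, set $m=\frac{k-2}{3}$; the same bookkeeping with $g^{k}=(g^{3})^{m}g^{2}$ and $g^{3}=q(a+b\omega)$ produces the conjugate pair $q^{m+1}(a+b\omega)^{m+1}$ and $q^{m+1}\overline{(a+b\omega)^{m+1}}$.

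In both cases the $i=1$ and $i=2$ terms are complex conjugates of each other: indeed $\bar\varphi^{2}(c)=\overline{\bar\varphi(c)}$ because $\bar\varphi(c)\in\{1,\omega,\omega^{2}\}$, and the two Gauss-sum coefficients are conjugate by the above. Hence their sum is $2Re\big(\bar\varphi(c)\,q^{m+1}(a+b\omega)^{m}\big)$ (resp.\ with exponent $m+1$), which yields the single closed form $T_{(e_{1},e_{2})}(a,b)=1+2(-1)^{k-1}q^{m+1}Re(\bar\varphi(c)(a+b\omega)^{m})$ and its $k\equiv 2$ analogue. The three displayed rows correspond exactly to the three possible values $\bar\varphi(c)\in\{1,\omega,\omega^{2}\}$.

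It remains to count the frequencies. As $a$ ranges over $\Bbb F_{q}^{*}$ and $b$ over $\Bbb F_{q^{k}}^{*}$, the norm $b^{\frac{q^{k}-1}{q-1}}$ is equidistributed over $\Bbb F_{q}^{*}$ (each value attained $\frac{q^{k}-1}{q-1}$ times), while $\varphi(a^{-k})=\varphi(a)^{-k}$ is equidistributed over $\{1,\omega,\omega^{2}\}$ because $x\mapsto x^{-k}$ is a bijection of the cubic roots of unity when $\gcd(k,3)=1$; since $a$ and $b$ vary independently, $\bar\varphi(c)$ takes each of $1,\omega,\omega^{2}$ exactly $\frac{(q-1)(q^{k}-1)}{3}$ times, giving the stated frequencies. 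The main obstacle I anticipate is not any single deep step but the careful coordination of the exponent reductions of $\varphi^{k},\bar\varphi^{k},\bar\varphi^{2k}$ modulo $3$ in the two residue classes, together with the equidistribution argument that actually pins down the count $\frac{(q-1)(q^{k}-1)}{3}$.
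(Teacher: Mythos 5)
Your proposal is correct and follows essentially the same route as the paper's proof: specialize Lemma 3.3 to $d=3$, split off the $i=0$ term, use $G(\bar\varphi,\chi)=\varphi(-1)\overline{G(\varphi,\chi)}$ with $\varphi(-1)=1$ and $G(\varphi,\chi)^3=qJ(\varphi,\varphi)=q(a+b\omega)$ to reduce the $i=1,2$ terms to a conjugate pair summing to $2\,Re\bigl(\bar\varphi(c)q^{m+1}(a+b\omega)^{m}\bigr)$ (with the exponent shifted to $m+1$ when $k\equiv 2\pmod 3$), and then sort $(a,b)$ by the value of $\bar\varphi(c)$. The only difference is cosmetic: where the paper simply asserts the counts $\frac{(q-1)(q^k-1)}{3}$ via cyclotomic classes, you justify them with the norm-equidistribution argument, which is a slightly more careful rendering of the same step.
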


\begin{proof}
(1) Assume that $k\equiv 1\pmod{3}$. Let $k=3t+1$. By Lemma 3.3, for $a\neq 0,b\neq 0$, \begin{eqnarray*}T_{(e_{1},e_{2})}(a,b)&=&(-1)^{k-1}\sum_{i=0}^{2}\bar\varphi^i(b^{\frac{q^{k}-1}{q-1}}a^{-k})
G(\bar{\varphi}^{ki},\chi)G( \varphi^i,\chi)^k\\
&=&1+(-1)^{k-1}\bar\varphi(b^{\frac{q^{k}-1}{q-1}}a^{-k})
G(\bar{\varphi}^{k},\chi)G( \varphi,\chi)^k\\
& &+(-1)^{k-1}\bar\varphi^2(b^{\frac{q^{k}-1}{q-1}}a^{-k})
G(\bar{\varphi}^{2k},\chi)G( \varphi^2,\chi)^k\\
&=&1+(-1)^{k-1}\bar\varphi(b^{\frac{q^{k}-1}{q-1}}a^{-k})
G(\bar{\varphi},\chi)G( \varphi,\chi)^k\\
& &+(-1)^{k-1}\varphi(b^{\frac{q^{k}-1}{q-1}}a^{-k})
G(\varphi,\chi)G( \varphi^2,\chi)^k.
\end{eqnarray*}
Since $G(\bar{\varphi},\chi)=\varphi(-1)\overline{G( \varphi,\chi)}$ and $G(\varphi,\chi)=\varphi^{2}(-1)\overline{G( \varphi^{2},\chi)}$, we have
\begin{eqnarray*}
T_{(e_{1},e_{2})}(a,b)&=&1+q(-1)^{k-1}\bar\varphi(b^{\frac{q^{k}-1}{q-1}}a^{-k})\varphi(-1)G( \varphi,\chi)^{k-1}\\
& &+q(-1)^{k-1}\varphi(b^{\frac{q^{k}-1}{q-1}}a^{-k})\varphi^{2}(-1)G( \varphi^2,\chi)^{k-1}\\
&=&1+q(-1)^{k-1}\bar\varphi(b^{\frac{q^{k}-1}{q-1}}a^{-k})\varphi(-1)G( \varphi,\chi)^{3t}\\
& &+q(-1)^{k-1}\varphi(b^{\frac{q^{k}-1}{q-1}}a^{-k})\varphi^{2}(-1)G( \varphi^2,\chi)^{3t}.
\end{eqnarray*}
By Lemmas 2.2 and 2.3, $G(\varphi, \chi)^3=qJ(\varphi,\varphi)=q(a+b\omega)$. And $G(\varphi^{2}, \chi)^3=qJ(\varphi^{2},\varphi^{2})=q(a+b\omega^{2})$. Hence,
\begin{eqnarray*}
T_{(e_{1},e_{2})}(a,b)
&=&1+q^{t+1}(-1)^{k-1}\bar\varphi(b^{\frac{q^{k}-1}{q-1}}a^{-k})\varphi(-1)(a+b\omega)^{t}\\
& &+q^{t+1}(-1)^{k-1}\varphi(b^{\frac{q^{k}-1}{q-1}}a^{-k})\varphi^{2}(-1)(a+b\omega^{2})^{t}\\
&=&1+2q^{t+1}(-1)^{k-1}Re(\bar\varphi(b^{\frac{q^{k}-1}{q-1}}a^{-k})\varphi(-1)(a+b\omega)^{t})\\
&=&1+2q^{\frac{k-1}{3}+1}(-1)^{k-1}Re(\bar\varphi(b^{\frac{q^{k}-1}{q-1}}a^{-k})\varphi(-1)(a+b\omega)^{\frac{k-1}{3}}).
\end{eqnarray*}
Since $(-1)^3=(-1)$, $\varphi(-1)=1$. Hence,
$$T_{(e_{1},e_{2})}(a,b)=1+2q^{\frac{k-1}{3}+1}(-1)^{k-1}
Re(\bar\varphi(b^{\frac{q^{k}-1}{q-1}}a^{-k})(a+b\omega)^{\frac{k-1}{3}}).$$
For $\Bbb F_{q}^{*}=\langle \delta\rangle$, the cyclotomic classes of order 3 of $\Bbb F_{q}$ are defined as
$$C_{i}^{(3)}=\delta^{i}\langle \delta^{3}\rangle.$$
Without loss of generality, we assume that $\varphi(\delta)=\omega$. If $b^{\frac{q^{k}-1}{q-1}}a^{-k}\in C_{0}^{(3)}$, we have $\bar\varphi(b^{\frac{q^{k}-1}{q-1}}a^{-k})=1$ and
$$T_{(e_{1},e_{2})}(a,b)=1+2q^{\frac{k-1}{3}+1}(-1)^{k-1}
Re((a+b\omega)^{\frac{k-1}{3}})$$
which occurs $\frac{(q-1)(q^{k}-1)}{3}$ times. If $b^{\frac{q^{k}-1}{q-1}}a^{-k}\in C_{1}^{(3)}$, we have $\bar\varphi(b^{\frac{q^{k}-1}{q-1}}a^{-k})=\omega^{2}$ and
$$T_{(e_{1},e_{2})}(a,b)=1+2q^{\frac{k-1}{3}+1}(-1)^{k-1}
Re(\omega^{2}(a+b\omega)^{\frac{k-1}{3}})$$
which occurs $\frac{(q-1)(q^{k}-1)}{3}$ times. If $b^{\frac{q^{k}-1}{q-1}}a^{-k}\in C_{2}^{(3)}$, we have $\bar\varphi(b^{\frac{q^{k}-1}{q-1}}a^{-k})=\omega$ and
$$T_{(e_{1},e_{2})}(a,b)=1+2q^{\frac{k-1}{3}+1}(-1)^{k-1}
Re(\omega(a+b\omega)^{\frac{k-1}{3}})$$
which occurs $\frac{(q-1)(q^{k}-1)}{3}$ times.

(2) Assume that $k\equiv 2\pmod{3}$. By using a similar method, we can obtain the result.
\end{proof}

Combining Theorem 3.4 and Lemma 4.9, we can easily obtain the weight distribution of $\mathcal{C}_{((\frac{q^{k}-1}{q-1})e_{1},e_{2})}$ for $d=3$ and any $k\not \equiv 0\pmod{3}$.

\begin{thm}
Let $\gcd(\frac{q^{k}-1}{q-1},e_{2})=1$, $\gcd(q-1,e_{1},e_{2})=1$, $\gcd(q-1,ke_{1}-e_{2})=3$ and $\mathcal{C}_{((\frac{q^{k}-1}{q-1})e_{1},e_{2})}$ be defined as (1.1). Let $4q=A^{2}+27B^{2}$ with $A\equiv 1\pmod{3}$. Let $A=2a-b,B=b/3$. Then
$\mathcal{C}_{((\frac{q^{k}-1}{q-1})e_{1},e_{2})}$ is a $[q^{k}-1,k+1]$ cyclic code and the weight distributions are given in Table 3 if $k\equiv 1\pmod{3}$ and Table 4 if $k\equiv 2\pmod{3}$, respectively.

\[ \small{\begin{tabular} {c} Table 3. Weight distribution of the code in Theorem 4.10 if $k\equiv 1\pmod{3}$\\
{\begin{tabular}{cc}
  \hline
 weight & Frequency\\
  \hline
  $0$ &  1\\
  $q^{k-1}(q-1)-1-2q^{\frac{k-1}{3}}(-1)^{k-1}Re((a+b\omega)^{\frac{k-1}{3}})$ & $(q-1)(q^{k}-1)/3$\\
  $q^{k-1}(q-1)-1-2q^{\frac{k-1}{3}}(-1)^{k-1}Re(\omega(a+b\omega)^{\frac{k-1}{3}})$ & $(q-1)(q^{k}-1)/3$\\
  $q^{k-1}(q-1)-1-2q^{\frac{k-1}{3}}(-1)^{k-1}Re(\omega^{2}(a+b\omega)^{\frac{k-1}{3}})$ & $(q-1)(q^{k}-1)/3$\\
  $q^{k-1}(q-1)$ & $q^{k}-1$\\
  $q^{k}-1$ & $q-1$\\
  \hline
\end{tabular}}
\end{tabular}}
\]

\[ \small{\begin{tabular} {c} Table 4. Weight distribution of the code in Theorem 4.10 if $k\equiv 2\pmod{3}$\\
{\begin{tabular}{cc}
  \hline
 weight & Frequency\\
  \hline
  $0$ &  1\\
  $q^{k-1}(q-1)-1-2q^{\frac{k-2}{3}}(-1)^{k-1}Re((a+b\omega)^{\frac{k-2}{3}+1})$ & $(q-1)(q^{k}-1)/3$\\
  $q^{k-1}(q-1)-1-2q^{\frac{k-2}{3}}(-1)^{k-1}Re(\omega(a+b\omega)^{\frac{k-2}{3}+1})$ & $(q-1)(q^{k}-1)/3$\\
  $q^{k-1}(q-1)-1-2q^{\frac{k-2}{3}}(-1)^{k-1}Re(\omega^{2}(a+b\omega)^{\frac{k-2}{3}+1})$ & $(q-1)(q^{k}-1)/3$\\
  $q^{k-1}(q-1)$ & $q^{k}-1$\\
  $q^{k}-1$ & $q-1$\\
  \hline
\end{tabular}}
\end{tabular}}
\]
\end{thm}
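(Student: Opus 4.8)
The plan is to obtain the distribution by feeding Lemma 4.9 into the weight formula of Theorem 3.4, so that the theorem becomes a transcription rather than a fresh computation. Recall that for $a\ne 0$, $b\ne 0$ Theorem 3.4 gives
\begin{equation*}
w_H(c(a,b))=\frac{(q^k-1)(q-1)}{q}-\frac{1}{q}\,T_{(e_1,e_2)}(a,b),
\end{equation*}
while the length, dimension and cyclicity ($[q^k-1,k+1]$) are already furnished by the Delsarte argument in the introduction. First I would record the three degenerate strata straight from Theorem 3.4, valid for every $d$: the pair $a=b=0$ yields weight $0$ once; $a\ne 0$, $b=0$ yields weight $q^k-1$ with frequency $q-1$; and $a=0$, $b\ne 0$ yields weight $q^{k-1}(q-1)$ with frequency $q^k-1$. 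These account for the bottom two rows (and the zero row) of both Table 3 and Table 4.

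For the generic stratum $a\ne 0$, $b\ne 0$ I would substitute the three possible values of $T_{(e_1,e_2)}(a,b)$ from Lemma 4.9 and simplify the numerical constant. The one identity needed is
\begin{equation*}
\frac{(q^k-1)(q-1)}{q}-\frac{1}{q}=q^{k-1}(q-1)-1,
\end{equation*}
together with $\frac1q\cdot 2q^{\frac{k-1}{3}+1}=2q^{\frac{k-1}{3}}$ when $k\equiv 1\pmod 3$. Plugging Lemma 4.9(1) into the weight formula then collapses onto exactly the three middle rows of Table 3, each inheriting the frequency $\frac{(q-1)(q^k-1)}{3}$ asserted in the lemma. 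The case $k\equiv 2\pmod 3$ runs identically, now with $\frac1q\cdot 2q^{\frac{k-2}{3}+1}=2q^{\frac{k-2}{3}}$ and the exponent $\frac{k-2}{3}+1$ on $a+b\omega$, producing Table 4 from Lemma 4.9(2).

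Finally I would verify the frequency bookkeeping. Because Lemma 4.9 has already partitioned the $(q-1)(q^k-1)$ admissible pairs evenly over the three cyclotomic classes $C_0^{(3)}$, $C_1^{(3)}$, $C_2^{(3)}$ of $\Bbb F_q^{*}$ according to the value $b^{\frac{q^k-1}{q-1}}a^{-k}$ (with $\bar\varphi$ taking the values $1,\omega^2,\omega$ on these classes under the normalization $\varphi(\delta)=\omega$), the frequencies carry over without further work, and the only global check is that the five nonzero-weight rows plus the zero row sum to the total codeword count: $1+(q-1)+(q^k-1)+(q-1)(q^k-1)=q^{k+1}$. I do not anticipate a real obstacle, since the genuine content---the Davenport--Hasse reduction and the Jacobi-sum identity $G(\varphi,\chi)^3=qJ(\varphi,\varphi)=q(a+b\omega)$---is already sealed inside Lemma 4.9; the sole point meriting care is matching each cyclotomic class to its row, which is immaterial to the distribution precisely because the three classes are equinumerous.
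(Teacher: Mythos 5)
Your proposal is correct and is essentially identical to the paper's own argument: the paper proves Theorem 4.10 in one line by "combining Theorem 3.4 and Lemma 4.9," which is precisely your substitution of the three values of $T_{(e_1,e_2)}(a,b)$ from Lemma 4.9 into the weight formula $w_H(c(a,b))=\frac{(q^k-1)(q-1)}{q}-\frac{1}{q}T_{(e_1,e_2)}(a,b)$, together with the degenerate strata already listed in Theorem 3.4. The simplifications $\frac{(q^k-1)(q-1)-1}{q}=q^{k-1}(q-1)-1$ and $\frac{1}{q}\cdot 2q^{\frac{k-1}{3}+1}=2q^{\frac{k-1}{3}}$, and the codeword-count check $1+(q-1)+(q^k-1)+(q-1)(q^k-1)=q^{k+1}$, are exactly the details the paper leaves implicit.
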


From Theorem 4.10, we can explicitly obtain the weight distribution for any $k\not \equiv 0\pmod{3}$. For instance, when $k=2,4,5,7$, we have the following results.

\begin{cor}
With the same notations as that in Theorem 4.10. Then the weight distributions of $\mathcal{C}_{((\frac{q^{k}-1}{q-1})e_{1},e_{2})}$ are given in Table 5 if $k=2$, Table 6 if $k=4$, Table 7 if $k=5$, Table 8 if $k=7$, respectively.

\[ \small{\begin{tabular} {c} Table 5. Weight distribution of the code in Corollary 4.11 if $k=2$\\
{\begin{tabular}{cc}
  \hline
 weight & Frequency\\
  \hline
  $0$ &  1\\
  $q(q-1)-1+A$ & $(q-1)(q^{2}-1)/3$\\
  $q(q-1)-1-\frac{A+9B}{2}$ & $(q-1)(q^{2}-1)/3$\\
  $q(q-1)-1+\frac{9B-A}{2}$ & $(q-1)(q^{2}-1)/3$\\
  $q(q-1)$ & $q^{2}-1$\\
  $q^{2}-1$ & $q-1$\\
  \hline
\end{tabular}}
\end{tabular}}
\]

\[ \small{\begin{tabular} {c} Table 6. Weight distribution of the code in Corollary 4.11 if $k=4$\\
{\begin{tabular}{cc}
  \hline
 weight & Frequency\\
  \hline
  $0$ &  1\\
  $q^{3}(q-1)-1+qA$ & $(q-1)(q^{4}-1)/3$\\
  $q^{3}(q-1)-1-\frac{q(A+9B)}{2}$ & $(q-1)(q^{4}-1)/3$\\
  $q^{3}(q-1)-1+\frac{q(9B-A)}{2}$ & $(q-1)(q^{4}-1)/3$\\
  $q^{3}(q-1)$ & $q^{4}-1$\\
  $q^{4}-1$ & $q-1$\\
  \hline
\end{tabular}}
\end{tabular}}
\]

\[ \small{\begin{tabular} {c} Table 7. Weight distribution of the code in Corollary 4.11 if $k=5$\\
{\begin{tabular}{cc}
  \hline
 weight & Frequency\\
  \hline
  $0$ &  1\\
  $q^{4}(q-1)-1-2q^{2}+27qB^{2}$ & $(q-1)(q^{5}-1)/3$\\
  $q^{4}(q-1)-1+q^{2}+\frac{9qB(A-3B)}{2}$ & $(q-1)(q^{5}-1)/3$\\
  $q^{4}(q-1)-1+q^{2}-\frac{9qB(A+3B)}{2}$ & $(q-1)(q^{5}-1)/3$\\
  $q^{4}(q-1)$ & $q^{5}-1$\\
  $q^{5}-1$ & $q-1$\\
  \hline
\end{tabular}}
\end{tabular}}
\]

\[ \small{\begin{tabular} {c} Table 8. Weight distribution of the code in Corollary 4.11 if $k=7$\\
{\begin{tabular}{cc}
  \hline
 weight & Frequency\\
  \hline
  $0$ &  1\\
  $q^{6}(q-1)-1-2q^{3}+27q^{2}B^{2}$ & $(q-1)(q^{7}-1)/3$\\
  $q^{6}(q-1)-1+q^{3}+\frac{9q^{2}B(A-3B)}{2}$ & $(q-1)(q^{7}-1)/3$\\
  $q^{6}(q-1)-1+q^{3}-\frac{9q^{2}B(A+3B)}{2}$ & $(q-1)(q^{7}-1)/3$\\
  $q^{6}(q-1)$ & $q^{7}-1$\\
  $q^{7}-1$ & $q-1$\\
  \hline
\end{tabular}}
\end{tabular}}
\]
\end{cor}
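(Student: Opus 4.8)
The plan is to feed each value $k\in\{2,4,5,7\}$ into Theorem 4.10, so that the whole task reduces to evaluating the three quantities $Re\big(\omega^{j}(a+b\omega)^{m}\big)$, $j=0,1,2$, that appear in Tables 3 and 4, and then rewriting them through $A=2a-b$ and $B=b/3$. Observe first that the exponent $m$ is either $1$ or $2$: for $k=2$ (resp.\ $k=4$) we have $k\equiv 2$ (resp.\ $k\equiv 1$) $\pmod 3$ with exponent $\frac{k-2}{3}+1=1$ (resp.\ $\frac{k-1}{3}=1$), while for $k=5$ and $k=7$ the corresponding exponent equals $2$. The sign $(-1)^{k-1}$ is $-1$ for $k=2,4$ and $+1$ for $k=5,7$, and the power of $q$ in front is $q^{\lfloor (k-1)/3\rfloor}=1,\,q,\,q,\,q^{2}$ respectively; substituting these into the generic weight $q^{k-1}(q-1)-1-2q^{\lfloor (k-1)/3\rfloor}(-1)^{k-1}Re(\cdots)$ is what will produce the entries of Tables 5--8.

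For the linear case $m=1$ (that is, $k=2,4$) I would use $Re(\omega)=-\tfrac12$ together with $\omega^{3}=1$ to compute directly
\begin{align*}
Re(a+b\omega)&=a-\tfrac{b}{2}, &
Re\big(\omega(a+b\omega)\big)&=-\tfrac{a}{2}-\tfrac{b}{2}, &
Re\big(\omega^{2}(a+b\omega)\big)&=-\tfrac{a}{2}+b.
\end{align*}
Inserting $2a-b=A$ and $b=3B$ (hence $a=\tfrac{A+3B}{2}$) turns $2\,Re(a+b\omega)=2a-b=A$, $2\,Re\big(\omega(a+b\omega)\big)=-(a+b)=-\tfrac{A+9B}{2}$ and $2\,Re\big(\omega^{2}(a+b\omega)\big)=-a+2b=\tfrac{9B-A}{2}$, which are exactly the shifts recorded in Tables 5 and 6 (the $k=4$ entries differing only by the extra factor $q$).

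For the quadratic case $m=2$ (that is, $k=5,7$) I would first expand $(a+b\omega)^{2}=(a^{2}-b^{2})+(2ab-b^{2})\omega$ using $\omega^{2}=-1-\omega$, then multiply by $\omega^{j}$ and take the real part, obtaining
\begin{align*}
Re\big((a+b\omega)^{2}\big)&=a^{2}-ab-\tfrac{b^{2}}{2}, \\
Re\big(\omega(a+b\omega)^{2}\big)&=-\tfrac12\big(a^{2}+2ab-2b^{2}\big), \\
Re\big(\omega^{2}(a+b\omega)^{2}\big)&=-\tfrac12\big(a^{2}-4ab+b^{2}\big).
\end{align*}
The decisive simplification is the norm identity $q=|J(\varphi,\varphi)|^{2}=a^{2}-ab+b^{2}$ from Section 2, which lets me eliminate all genuinely quadratic terms: $a^{2}-ab=q-b^{2}=q-9B^{2}$, while $a^{2}+b^{2}=q+ab$ with $ab=\tfrac{3B(A+3B)}{2}$ and $a-b=\tfrac{A-3B}{2}$. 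These reduce the three real parts to $q-\tfrac{27}{2}B^{2}$, $-\tfrac12\big(q+\tfrac{9}{2}B(A-3B)\big)$ and $-\tfrac12\big(q-\tfrac{9}{2}B(A+3B)\big)$ respectively, and feeding them into the weight formula (with prefactor $-2q$ for $k=5$ and $-2q^{2}$ for $k=7$) yields Tables 7 and 8. The computation is entirely mechanical; the only point requiring care is the conversion from the $(a,b)$ variables to the invariants $(A,B)$, where one must consistently invoke $q=a^{2}-ab+b^{2}$ to clear the squares $a^{2}$ and the cross term $ab$ so that each weight becomes a polynomial in $A$, $B$ and $q$ alone. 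No genuine obstacle arises beyond this bookkeeping.
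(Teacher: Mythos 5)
Your proposal is correct and follows exactly the route the paper intends: the corollary is stated as an immediate specialization of Theorem 4.10, and your computation of $Re\big(\omega^{j}(a+b\omega)^{m}\big)$ for $m=1,2$, followed by the conversion to $(A,B)$ via $a=\tfrac{A+3B}{2}$, $b=3B$ and the norm identity $q=a^{2}-ab+b^{2}$, reproduces every entry of Tables 5--8, including the signs $(-1)^{k-1}$ and the prefactors $q^{\lfloor (k-1)/3\rfloor}$. The paper omits this verification entirely, so your write-up supplies precisely the missing mechanical details.
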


Checking the results in Corollary 4.11, we can make $\mathcal{C}_{((\frac{q^{k}-1}{q-1})e_{1},e_{2})}$ a four-weight code for some special $q$.

\begin{cor}
Let $4q=A^{2}+27B^{2}$ with $B=0$ and other notations be the same as that in Theorem 4.10. Then the cyclic code $\mathcal{C}_{((\frac{q^{k}-1}{q-1})e_{1},e_{2})}$ is a four-weight code with the weight distributions are given in Tables 9-12 for $k=2,4,5,7$, respectively.

\[ \small{\begin{tabular} {c} Table 9. Weight distribution of the code in Corollary 4.12 if $k=2$ and $B=0$\\
{\begin{tabular}{cc}
  \hline
 weight & Frequency\\
  \hline
  $0$ &  1\\
  $q(q-1)-1+A$ & $(q-1)(q^{2}-1)/3$\\
  $q(q-1)-1-\frac{A}{2}$ & $2(q-1)(q^{2}-1)/3$\\
  $q(q-1)$ & $q^{2}-1$\\
  $q^{2}-1$ & $q-1$\\
  \hline
\end{tabular}}
\end{tabular}}
\]

\[ \small{\begin{tabular} {c} Table 10. Weight distribution of the code in Corollary 4.12 if $k=4$ and $B=0$\\
{\begin{tabular}{cc}
  \hline
 weight & Frequency\\
  \hline
  $0$ &  1\\
  $q^{3}(q-1)-1+qA$ & $(q-1)(q^{4}-1)/3$\\
  $q^{3}(q-1)-1-\frac{qA}{2}$ & $2(q-1)(q^{4}-1)/3$\\
  $q^{3}(q-1)$ & $q^{4}-1$\\
  $q^{4}-1$ & $q-1$\\
  \hline
\end{tabular}}
\end{tabular}}
\]

\[ \small{\begin{tabular} {c} Table 11. Weight distribution of the code in Corollary 4.12 if $k=5$ and $B=0$\\
{\begin{tabular}{cc}
  \hline
 weight & Frequency\\
  \hline
  $0$ &  1\\
  $q^{4}(q-1)-1-2q^{2}$ & $(q-1)(q^{5}-1)/3$\\
  $q^{4}(q-1)-1+q^{2}$ & $2(q-1)(q^{5}-1)/3$\\
  $q^{4}(q-1)$ & $q^{5}-1$\\
  $q^{5}-1$ & $q-1$\\
  \hline
\end{tabular}}
\end{tabular}}
\]

\[ \small{\begin{tabular} {c} Table 12. Weight distribution of the code in Corollary 4.12 if $k=7$ and $B=0$\\
{\begin{tabular}{cc}
  \hline
 weight & Frequency\\
  \hline
  $0$ &  1\\
  $q^{6}(q-1)-1-2q^{3}$ & $(q-1)(q^{7}-1)/3$\\
  $q^{6}(q-1)-1+q^{3}$ & $2(q-1)(q^{7}-1)/3$\\
  $q^{6}(q-1)$ & $q^{7}-1$\\
  $q^{7}-1$ & $q-1$\\
  \hline
\end{tabular}}
\end{tabular}}
\]

\end{cor}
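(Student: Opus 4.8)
The plan is to specialize to the degenerate situation $B=0$ the weight distributions already tabulated in Corollary 4.11, and to observe that in each of the four cases two of the three Gauss-sum-dependent weights coalesce. The starting point is that the hypothesis $B=0$ forces $b=3B=0$, hence $A=2a-b=2a$ and $4q=A^2$; in particular $a+b\omega=a$ is a \emph{real} number and $a^2=q$. Consequently, for every relevant exponent $m$ the three quantities $Re((a+b\omega)^m)$, $Re(\omega(a+b\omega)^m)$, $Re(\omega^2(a+b\omega)^m)$ governing the three nontrivial weights degenerate, because $Re(\omega)=Re(\omega^2)=-\tfrac12$ while $(a+b\omega)^m=a^m$ is real.

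First I would substitute $B=0$ directly into Tables 5--8. For $k=2$ the weights $q(q-1)-1-\frac{A+9B}{2}$ and $q(q-1)-1+\frac{9B-A}{2}$ both collapse to $q(q-1)-1-\frac{A}{2}$; for $k=4$ the weights $q^3(q-1)-1-\frac{q(A+9B)}{2}$ and $q^3(q-1)-1+\frac{q(9B-A)}{2}$ both collapse to $q^3(q-1)-1-\frac{qA}{2}$; for $k=5$ the two weights carrying the corrections $\pm\frac{9qB(A\mp3B)}{2}$ both collapse to $q^4(q-1)-1+q^2$; and for $k=7$ the analogous pair collapses to $q^6(q-1)-1+q^3$. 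In each case the two coinciding weights separately had frequency $(q-1)(q^k-1)/3$, so the merged weight acquires frequency $2(q-1)(q^k-1)/3$, while the surviving third nontrivial weight (the row governed by $Re((a+b\omega)^m)$, which equals $a^m$) keeps frequency $(q-1)(q^k-1)/3$.

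Assembling the data, each code then carries exactly four nonzero weights: the isolated Gauss-sum weight of frequency $(q-1)(q^k-1)/3$, the merged Gauss-sum weight of frequency $2(q-1)(q^k-1)/3$, together with the two elementary weights $q^{k-1}(q-1)$ and $q^k-1$, of respective frequencies $q^k-1$ and $q-1$, inherited unchanged from Theorem 3.4. This reproduces Tables 9--12 for $k=2,4,5,7$ and shows the code is four-weight.

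The computations involved are entirely routine, the only substantive point being the identity $Re(\omega(a+b\omega)^m)=Re(\omega^2(a+b\omega)^m)=-\tfrac12 a^m$, valid precisely because $B=0$ kills the imaginary part $\tfrac{\sqrt{-3}}{2}$ in $\omega$. I expect no genuine obstacle here; the one item worth verifying is that the four surviving values remain pairwise distinct for the parameters at hand, i.e.\ that neither the isolated weight nor the merged weight accidentally equals $q^{k-1}(q-1)$ or $q^k-1$, which follows from $a^2=q\neq0$ together with $4q=A^2$.
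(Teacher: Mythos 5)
Your route is the same as the paper's: Corollary 4.12 carries no separate proof there and is obtained exactly as you propose, by setting $B=0$ (hence $b=3B=0$, so $J(\varphi,\varphi)=a$ is real, $A=2a$ and $4q=A^{2}$) in the tables of Corollary 4.11, whereupon the two weights involving $\omega$ and $\omega^{2}$ coalesce — since $Re(\omega\,a^{m})=Re(\omega^{2}a^{m})=-\tfrac12 a^{m}$ — and their frequencies add. Your substitutions reproduce Tables 9--12 correctly.

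The one genuine flaw is your closing claim that pairwise distinctness of the four nonzero weights ``follows from $a^{2}=q\neq 0$ together with $4q=A^{2}$.'' That is an assertion, not a verification, and it is in fact false in one case: with $B=0$ we have $A=\pm 2\sqrt{q}$, and for $k=2$ the isolated weight $q(q-1)-1+A$ equals the weight $q^{2}-1$ precisely when $A=q$, i.e. when $2\sqrt{q}=q$, i.e. $q=4$. For $q=4$, $k=2$ (where $A=4\equiv 1\pmod 3$) the code is only three-weight, with enumerator $1+30z^{9}+15z^{12}+18z^{15}$ (compare Example 4.16), and the paper itself records this exception in Remark 4.13. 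So the corollary's blanket ``four-weight'' claim, and your proof of it, need the proviso $(q,k)\neq(4,2)$. For the remaining cases the distinctness you wanted does hold and is easy to check: for $k=4$ a coincidence with $q^{4}-1$ would force $qA=q^{3}$, hence $4q=q^{4}$, impossible for a prime power $q$; for $k=5,7$ the candidate equalities reduce to $q^{2}=q^{4}$ or $q^{3}=q^{6}$ or to $-1-2q^{2}=0$, all impossible; and for $k=2$ the case $q=4$ is the only degeneracy.
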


\begin{rem}
Let $q=p^{e}$ with $e$ a positive integer. In Corollary 4.12, the condition $B=0$ implies that $4q=A^{2}$ with $A\equiv 1\pmod{3}$. This condition is equivalent to $p\equiv 2\pmod{3}$ and $e$ is even. In general, the code in Corollary 4.12 has four weights. However, for $q=4$ and $k=2$, we have $A=1$ and this code has three weights.
\end{rem}

\begin{cor}
Let $k=2$, and other notations be the same as that in Theorem 4.10. Then the cyclic code $\mathcal{C}_{((\frac{q^{k}-1}{q-1})e_{1},e_{2})}$ is a four-weight code if $A=1$ or $A=9B-2$. If $A=1$, the weight distribution is given in Table 13. If $A=9B-2$, the weight distribution is given in Table 14.

\[ \small{\begin{tabular} {c} Table 13. Weight distribution of the code in Corollary 4.14 if $k=2$ and $A=1$\\
{\begin{tabular}{cc}
  \hline
 weight & Frequency\\
  \hline
  $0$ &  1\\
  $q(q-1)-1-\frac{1+9B}{2}$ & $(q-1)(q^{2}-1)/3$\\
  $q(q-1)-1+\frac{9B-1}{2}$ & $(q-1)(q^{2}-1)/3$\\
  $q(q-1)$ & $(q+2)(q^{2}-1)/3$\\
  $q^{2}-1$ & $q-1$\\
  \hline
\end{tabular}}
\end{tabular}}
\]

\[ \small{\begin{tabular} {c} Table 14. Weight distribution of the code in Corollary 4.14 if $k=2$ and $A=9B-2$\\
{\begin{tabular}{cc}
  \hline
 weight & Frequency\\
  \hline
  $0$ &  1\\
  $q(q-1)-1+9B-2$ & $(q-1)(q^{2}-1)/3$\\
  $q(q-1)-9B$ & $(q-1)(q^{2}-1)/3$\\
  $q(q-1)$ & $(q+2)(q^{2}-1)/3$\\
  $q^{2}-1$ & $q-1$\\
  \hline
\end{tabular}}
\end{tabular}}
\]
\end{cor}

\begin{rem}
In Corollary 4.14, if $A=1$, we have $4q=1+27B^{2}$, e.g. $4\cdot 7= 1+27$; if $A=9B-2$, we have $q=27B^{2}-9B+1$, e.g. $19= 27-9+1$, $37=27\cdot(-1)^2-9\cdot (-1)+1$.
\end{rem}

\begin{exa}
Let $q=4$, $e_{1}=2,e_{2}=1$, $k=2$, by a Magma experiment, we obtain that $\mathcal{C}_{((\frac{q^{k}-1}{q-1})e_{1},e_{2})}$ in Corollary 4.11 is a $[15,3,9]$ three-weight code with weight enumerator
$$1+30z^{9}+15z^{12}+18z^{15}.$$
This coincides with the result given in Corollary 4.11.
\end{exa}

\begin{exa}
Let $q=7$, $e_{1}=e_{2}=1$, $k=4$, by a Magma experiment, we obtain that $\mathcal{C}_{((\frac{q^{k}-1}{q-1})e_{1},e_{2})}$ in Corollary 4.11 is a $[2400,5,2022]$ five-weight code with weight enumerator
$$1+4800z^{2022}+2400z^{2058}+4800z^{2064}+4800z^{2085}+6z^{2400}.$$
This coincides with the result given in Corollary 4.11.
\end{exa}

\begin{exa}
Let $q=4$, $e_{1}=1,e_{2}=2$, $k=5$, by a Magma experiment, we obtain that $\mathcal{C}_{((\frac{q^{k}-1}{q-1})e_{1},e_{2})}$ in Corollary 4.11 is a $[1023,6,735]$ four-weight code with weight enumerator
$$1+1023z^{735}+1023z^{768}+2046z^{783}+3z^{1023}.$$
This coincides with the result given in Corollary 4.11.
\end{exa}

\subsection{$d=4$}
In this subsection, we determine the weight distribution of $\mathcal{C}_{((\frac{q^{k}-1}{q-1})e_{1},e_{2})}$ for $d=4$. Since $\gcd(q-1,ke_{1}-e_{2})=4$ and $\gcd(\frac{q^{k}-1}{q-1},e_{2})=1$, we have that $q$ is odd and $k$ is odd.

For $q\equiv 1\pmod{4}$, it is known that $q$ can be uniquely written as $q=m^{2}+n^{2}$ with odd $m$ and even $n$, i.e., either $m\equiv 1\pmod 4$ if $4|n$, or $m\equiv 3\pmod 4$ if $2||n$.  Let $\pi=m+ni$ be a primary element (see \cite{IR}), where $i=\sqrt{-1}$. For the multiplicative character $\varphi$ of order 4, the Gauss sum $G(\varphi,\chi)$ is given in \cite{IR} as follows.

\begin{lem}(Prop. 9.9.5, \cite{IR})
For $\ord(\varphi)=4$, $$G(\varphi,\chi)^{4}=\pi^{3}\bar\pi=q\pi^{2}.$$
\end{lem}

\begin{lem}
Let $k\geq 2$ be a positive integer and $e_{1},e_{2}$  positive integers such that $\gcd(\frac{q^{k}-1}{q-1},e_{2})=1$ and $(q-1, ke_1-e_{2})=4$. Let  $q=m^{2}+n^{2}$ with odd $m$ and even $n$. For $a\neq 0,b\neq 0$, the value distribution of $T_{(e_{1},e_{2})}(a,b)$ is given as follows.

If $k\equiv 1\pmod{4}$,
\begin{eqnarray*} T_{(e_{1},e_{2})}(a,b)
=\left\{
\begin{array}{ll}
1+
q^{\frac{k+1}{2}}+2q^{1+\frac{k-1}{4}}Re((m+ni)^{\frac{k-1}{2}}),   &      \frac{(q-1)(q^{k}-1)}{4}\ times,\\
1-
q^{\frac{k+1}{2}}+2q^{1+\frac{k-1}{4}}Re(i(m+ni)^{\frac{k-1}{2}}),   &      \frac{(q-1)(q^{k}-1)}{4}\ times,\\
1+
q^{\frac{k+1}{2}}+2q^{1+\frac{k-1}{4}}Re(-(m+ni)^{\frac{k-1}{2}}),   &      \frac{(q-1)(q^{k}-1)}{4}\ times\\
1-
q^{\frac{k+1}{2}}+2q^{1+\frac{k-1}{4}}Re(-i(m+ni)^{\frac{k-1}{2}}), &      \frac{(q-1)(q^{k}-1)}{4}\ times.
\end{array} \right. \end{eqnarray*}
And if $k\equiv 3\pmod{4}$,
\begin{eqnarray*} T_{(e_{1},e_{2})}(a,b)=\left\{
\begin{array}{ll}
1+
q^{\frac{k+1}{2}}+2q^{1+\frac{k-3}{4}}Re((m+ni)^{2+\frac{k-3}{2}}),   &      \frac{(q-1)(q^{k}-1)}{4}\ times,\\
1-
q^{\frac{k+1}{2}}+2q^{1+\frac{k-3}{4}}Re(i(m+ni)^{2+\frac{k-3}{2}}),   &      \frac{(q-1)(q^{k}-1)}{4}\ times,\\
1+
q^{\frac{k+1}{2}}+2q^{1+\frac{k-3}{4}}Re(-(m+ni)^{2+\frac{k-3}{2}}),   &      \frac{(q-1)(q^{k}-1)}{4}\ times\\
1-
q^{\frac{k+1}{2}}+2q^{1+\frac{k-3}{4}}Re(-i(m+ni)^{2+\frac{k-3}{2}}), &      \frac{(q-1)(q^{k}-1)}{4}\ times.
\end{array} \right. \end{eqnarray*}
\end{lem}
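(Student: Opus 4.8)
The plan is to specialize Lemma 3.3 to $d=4$ and evaluate the four Gauss-sum terms one index at a time, then translate each into the cyclotomic class containing $c:=b^{\frac{q^{k}-1}{q-1}}a^{-k}$. Since $\gcd(q-1,ke_1-e_2)=4$ forces $q\equiv1\pmod4$ and $k$ odd, we have $(-1)^{k-1}=1$, so
$$T_{(e_1,e_2)}(a,b)=\sum_{j=0}^{3}\bar\varphi^{j}(c)\,G(\bar\varphi^{kj},\chi)\,G(\varphi^{j},\chi)^{k},$$
with $\varphi$ of order $4$ and $\varphi^{2}=\eta$ the quadratic character. First I would isolate the $j=0$ term, which is $G(\psi_0,\chi)\,G(\psi_0,\chi)^{k}=(-1)(-1)^{k}=1$ because $k$ is odd; this accounts for the leading $1$ in every row of the table.

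Next I would evaluate the $j=2$ term. Using $\varphi^{2}=\eta$, the fact that $\eta^{k}=\eta$ for $k$ odd, and $\bar\varphi^{2}(c)=\eta(c)$, this term equals $\eta(c)\,G(\eta,\chi)^{k+1}$. Because $q\equiv1\pmod4$ gives $\eta(-1)=1$ and hence $G(\eta,\chi)^{2}=q$, and $k+1$ is even, it collapses to $\eta(c)\,q^{\frac{k+1}{2}}=\pm q^{\frac{k+1}{2}}$, which supplies the $\pm q^{\frac{k+1}{2}}$ appearing in each row. The $j=1$ and $j=3$ terms are complex conjugates of one another (using $\bar\varphi^{3}=\varphi$, $\overline{G(\psi,\chi)}=\psi(-1)G(\bar\psi,\chi)$, and $\varphi^{k}(-1)^{2}=1$), so their sum is $2\,Re$ of the $j=1$ term.

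To simplify the $j=1$ term I would use $G(\bar\varphi,\chi)=\varphi(-1)\overline{G(\varphi,\chi)}$ together with $|G(\varphi,\chi)|^{2}=q$, reducing it to a scalar times a power of $G(\varphi,\chi)$, and then invoke Lemma 4.19, $G(\varphi,\chi)^{4}=q\pi^{2}$ with $\pi=m+ni$, to turn that power of $G(\varphi,\chi)$ into a power of $\pi$. Splitting on $k\equiv1$ versus $k\equiv3\pmod4$ — which changes $\bar\varphi^{k}$ from $\bar\varphi$ to $\varphi$ and hence the surviving exponent of $G(\varphi,\chi)$ — yields $\bar\varphi(c)\varphi(-1)\,q^{1+\frac{k-1}{4}}\pi^{\frac{k-1}{2}}$ in the first case and $\bar\varphi(c)\,q^{1+\frac{k-3}{4}}\pi^{2+\frac{k-3}{2}}$ in the second. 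Taking twice the real part produces the $2q^{\cdots}Re(\cdots)$ contributions in the tables.

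Finally I would read off the value distribution from the coset $C_{j}^{(4)}\ni c$: writing $\varphi(c)=\varphi(\delta)^{j}$, one has $\eta(c)=(-1)^{j}$ and $\bar\varphi(c)\in\{1,i,-1,-i\}$, and the extra factor $\varphi(-1)=\pm1$ present in the $k\equiv1$ case merely permutes these four units among themselves, so all four rows occur. To obtain the stated frequencies I would show $c=N(b)a^{-k}$ is equidistributed over the four cyclotomic classes: expanding the coset indicator into characters and using that $\varphi^{\ell}\circ N$ is nontrivial for $\ell\neq0$ (surjectivity of the norm $N\colon\Bbb F_{q^{k}}^{*}\to\Bbb F_{q}^{*}$) kills all but the principal term, leaving exactly $(q-1)(q^{k}-1)/4$ pairs $(a,b)$ per class. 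The main obstacle is the bookkeeping in this last step: correctly matching each of the four cosets to its table row across both the $k\bmod4$ split and the $\varphi(-1)$ twist, while confirming that the $\pm q^{\frac{k+1}{2}}$ sign coming from $\eta(c)$ stays correlated with the right fourth-root coefficient $\bar\varphi(c)$ in the $Re$-term.
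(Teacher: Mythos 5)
Your proposal is correct and follows essentially the same route as the paper's proof: specialize Lemma 3.3 to $d=4$ with $(-1)^{k-1}=1$, evaluate the $i=0$ term as $1$ and the $i=2$ term as $\eta(c)G(\eta,\chi)^{k+1}=\pm q^{\frac{k+1}{2}}$, combine the $i=1$ and $i=3$ terms into $2\,Re(\cdot)$ via $G(\bar\varphi,\chi)=\varphi(-1)\overline{G(\varphi,\chi)}$, apply Lemma 4.19 to convert powers of $G(\varphi,\chi)$ into powers of $\pi=m+ni$ (with the split on $k\bmod 4$ handled exactly as in the paper), and read off the four values on the cyclotomic classes $C_j^{(4)}$, observing that the factor $\varphi(-1)=\pm1$ only permutes the rows while preserving the pairing with the sign of $\eta(c)$. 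The only point where you go beyond the paper is the frequency count $(q-1)(q^{k}-1)/4$, which the paper asserts without proof and you justify by expanding the coset indicator in characters and using surjectivity of the norm — a sound refinement, not a different approach.
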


\begin{proof}
Firstly, assume that $k\equiv 1\pmod{4}$. Let $k=4t+1$. By Lemma 3.3, for $a\neq 0,b\neq 0$,
\begin{eqnarray*}T_{(e_{1},e_{2})}(a,b)&=&(-1)^{k-1}\sum_{i=0}^{3}\bar\varphi^i(b^{\frac{q^{k}-1}{q-1}}a^{-k})
G(\bar{\varphi}^{ki},\chi)G( \varphi^i,\chi)^k\\
&=&1+\bar\varphi(b^{\frac{q^{k}-1}{q-1}}a^{-k})
G(\bar{\varphi}^{k},\chi)G( \varphi,\chi)^k+\bar\varphi^2(b^{\frac{q^{k}-1}{q-1}}a^{-k})
G(\bar{\varphi}^{2k},\chi)G( \varphi^2,\chi)^k\\
& &+\bar\varphi^3(b^{\frac{q^{k}-1}{q-1}}a^{-k})G(\bar{\varphi}^{3k},\chi)G( \varphi^3,\chi)^k\\
&=&1+\bar\varphi(b^{\frac{q^{k}-1}{q-1}}a^{-k})
G(\bar{\varphi},\chi)G( \varphi,\chi)^k+\eta(b^{\frac{q^{k}-1}{q-1}}a^{-k})
G(\eta,\chi)^{k+1}\\
& &+\bar\varphi^3(b^{\frac{q^{k}-1}{q-1}}a^{-k})G(\bar{\varphi}^{3},\chi)G( \varphi^3,\chi)^k\\
&=&1+q\varphi(-1)\bar\varphi(b^{\frac{q^{k}-1}{q-1}}a^{-k})G( \varphi,\chi)^{k-1}+\eta(b^{\frac{q^{k}-1}{q-1}}a^{-k})
G(\eta,\chi)^{k+1}\\
& &+q\bar\varphi(-1)\varphi(b^{\frac{q^{k}-1}{q-1}}a^{-k})G( \bar{\varphi},\chi)^{k-1}.\end{eqnarray*}
Since $G(\bar\varphi,\chi)=\varphi(-1)\overline{G(\varphi,\chi)}$, by Lemma 4.19, we have

\begin{eqnarray*}T_{(e_{1},e_{2})}(a,b)
&=&1+q\varphi(-1)\bar\varphi(b^{\frac{q^{k}-1}{q-1}}a^{-k})G( \varphi,\chi)^{k-1}+\eta(b^{\frac{q^{k}-1}{q-1}}a^{-k})
G(\eta,\chi)^{k+1}\\
& &+q\varphi(-1)^{3}\varphi(-1)^{k-1}\varphi(b^{\frac{q^{k}-1}{q-1}}a^{-k})\overline{G( \varphi,\chi)}^{k-1}\\
&=&1+q\varphi(-1)\bar\varphi(b^{\frac{q^{k}-1}{q-1}}a^{-k})G( \varphi,\chi)^{4t}+\eta(b^{\frac{q^{k}-1}{q-1}}a^{-k})
G(\eta,\chi)^{k+1}\\
& &+q\varphi(-1)\varphi(b^{\frac{q^{k}-1}{q-1}}a^{-k})\overline{G( \varphi,\chi)}^{4t}\\
&=&1+\eta(b^{\frac{q^{k}-1}{q-1}}a^{-k})
G(\eta,\chi)^{k+1}+2q\varphi(-1)Re(\bar\varphi(b^{\frac{q^{k}-1}{q-1}}a^{-k})G( \varphi,\chi)^{4t})\\
&=&1+\eta(b^{\frac{q^{k}-1}{q-1}}a^{-k})
G(\eta,\chi)^{k+1}+2q\varphi(-1)Re(\bar\varphi(b^{\frac{q^{k}-1}{q-1}}a^{-k})(q\pi^{2})^{t})\\
&=&1+\eta(b^{\frac{q^{k}-1}{q-1}}a^{-k})
G(\eta,\chi)^{k+1}+2q^{1+\frac{k-1}{4}}\varphi(-1)Re(\bar\varphi(b^{\frac{q^{k}-1}{q-1}}a^{-k})\pi^{\frac{k-1}{2}}).
\end{eqnarray*}
For $\Bbb F_{q}^{*}=\langle \delta\rangle$, the cyclotomic classes of order 4 of $\Bbb F_{q}$ are defined as
$$C_{j}^{(4)}=\delta^j\langle \delta^{4}\rangle, j=0,1,2,3.$$
Without loss of generality, we assume that $\varphi(\delta)=i$. By Lemma 2.1, $G(\eta,\chi)=(-1)^{e-1}\sqrt{(p^{*})^{e}}$ with $p^{*}=(-1)^{\frac{p-1}{2}}p$. If $b^{\frac{q^{k}-1}{q-1}}a^{-k}\in C_{0}^{(4)}$, we have $\bar\varphi(b^{\frac{q^{k}-1}{q-1}}a^{-k})=1$ and
\begin{eqnarray*}T_{(e_{1},e_{2})}(a,b)&=&1+
G(\eta,\chi)^{k+1}+2q^{1+\frac{k-1}{4}}\varphi(-1)Re(\pi^{\frac{k-1}{2}})\\
&=&1+
(\sqrt{(p^{*})^{e}})^{k+1}+2q^{1+\frac{k-1}{4}}\varphi(-1)Re((m+ni)^{\frac{k-1}{2}}),
\end{eqnarray*} which occurs $(q-1)(q^{k}-1)/4$ times. If $b^{\frac{q^{k}-1}{q-1}}a^{-k}\in C_{1}^{(4)}$, we have $\bar\varphi(b^{\frac{q^{k}-1}{q-1}}a^{-k})=-i$ and
\begin{eqnarray*}T_{(e_{1},e_{2})}(a,b)&=&1-
G(\eta,\chi)^{k+1}+2q^{1+\frac{k-1}{4}}\varphi(-1)Re(-i\pi^{\frac{k-1}{2}})\\
&=&1-
(\sqrt{(p^{*})^{e}})^{k+1}+2q^{1+\frac{k-1}{4}}\varphi(-1)Re(-i(m+ni)^{\frac{k-1}{2}}),
\end{eqnarray*} which occurs $(q-1)(q^{k}-1)/4$ times. If $b^{\frac{q^{k}-1}{q-1}}a^{-k}\in C_{2}^{(4)}$, we have $\bar\varphi(b^{\frac{q^{k}-1}{q-1}}a^{-k})=-1$ and
\begin{eqnarray*}T_{(e_{1},e_{2})}(a,b)&=&1+
G(\eta,\chi)^{k+1}+2q^{1+\frac{k-1}{4}}\varphi(-1)Re(-\pi^{\frac{k-1}{2}})\\
&=&1+
(\sqrt{(p^{*})^{e}})^{k+1}+2q^{1+\frac{k-1}{4}}\varphi(-1)Re(-(m+ni)^{\frac{k-1}{2}}),
\end{eqnarray*} which occurs $(q-1)(q^{k}-1)/4$ times. If $b^{\frac{q^{k}-1}{q-1}}a^{-k}\in C_{3}^{(4)}$, we have $\bar\varphi(b^{\frac{q^{k}-1}{q-1}}a^{-k})=i$ and
\begin{eqnarray*}T_{(e_{1},e_{2})}(a,b)&=&1-
G(\eta,\chi)^{k+1}+2q^{1+\frac{k-1}{4}}\varphi(-1)Re(i\pi^{\frac{k-1}{2}})\\
&=&1-
(\sqrt{(p^{*})^{e}})^{k+1}+2q^{1+\frac{k-1}{4}}\varphi(-1)Re(i(m+ni)^{\frac{k-1}{2}}),
\end{eqnarray*} which occurs $(q-1)(q^{k}-1)/4$ times. It is easy to deduce that
$$(\sqrt{(p^{*})^{e}})^{k+1}=q^{\frac{k+1}{2}}.$$ Then the value distribution follows. It is notable that the value distributions are the same whenever $\varphi(-1)=1$ or $\varphi(-1)=-1$. In fact, $\varphi(-1)=1$ if and only if $q\equiv 1\pmod 8$; $\varphi(-1)=-1$ if and only if $q\equiv 5\pmod 8$.

Similarly, for $k\equiv 3\pmod{4}$, we can get the desired result.
\end{proof}

Combining Theorem 3.4 and Lemma 4.20, we can easily obtain the weight distribution of $\mathcal{C}_{((\frac{q^{k}-1}{q-1})e_{1},e_{2})}$ for $d=4$ and any odd $k$.

\begin{thm}
Let $\gcd(q-1,e_{1},e_{2})=1,\gcd(\frac{q^{k}-1}{q-1},e_{2})=1$, $\gcd(q-1,ke_{1}-e_{2})=4$ and $\mathcal{C}_{((\frac{q^{k}-1}{q-1})e_{1},e_{2})}$ be defined as (1.1). Let $q=m^{2}+n^{2}$ with odd $m$ and even $n$. Then
$\mathcal{C}_{((\frac{q^{k}-1}{q-1})e_{1},e_{2})}$ is a $[q^{k}-1,k+1]$ cyclic code and the weight distributions are given in Table 15 if $k\equiv 1\pmod{4}$ and Table 16 if $k\equiv 3\pmod{4}$, respectively.

\[ \small{\begin{tabular} {c} Table 15. Weight distribution of the code in Theorem 4.21 if $k\equiv 1\pmod{4}$\\
{\begin{tabular}{cc}
  \hline
 weight & Frequency\\
  \hline
  $0$ &  1\\
  $q^{k-1}(q-1)-1-\frac{q^{\frac{k+1}{2}}+2q^{1+\frac{k-1}{4}}Re((m+ni)^{\frac{k-1}{2}})}{q}$ & $(q-1)(q^{k}-1)/4$\\
  $q^{k-1}(q-1)-1+\frac{q^{\frac{k+1}{2}}+2q^{1+\frac{k-1}{4}}Re(i(m+ni)^{\frac{k-1}{2}})}{q}$ & $(q-1)(q^{k}-1)/4$\\
  $q^{k-1}(q-1)-1-\frac{q^{\frac{k+1}{2}}+2q^{1+\frac{k-1}{4}}Re(-(m+ni)^{\frac{k-1}{2}})}{q}$ & $(q-1)(q^{k}-1)/4$\\
  $q^{k-1}(q-1)-1+\frac{q^{\frac{k+1}{2}}+2q^{1+\frac{k-1}{4}}Re(-i(m+ni)^{\frac{k-1}{2}})}{q}$ & $(q-1)(q^{k}-1)/4$\\
  $q^{k-1}(q-1)$ & $q^{k}-1$\\
  $q^{k}-1$ & $q-1$\\
  \hline
\end{tabular}}
\end{tabular}}
\]

\[ \small{\begin{tabular} {c} Table 16. Weight distribution of the code in Theorem 4.21 if $k\equiv 3\pmod{4}$\\
{\begin{tabular}{cc}
  \hline
 weight & Frequency\\
  \hline
  $0$ &  1\\
  $q^{k-1}(q-1)-1-\frac{q^{\frac{k+1}{2}}+2q^{1+\frac{k-3}{4}}Re((m+ni)^{2+\frac{k-3}{2}})}{q}$ & $(q-1)(q^{k}-1)/4$\\
  $q^{k-1}(q-1)-1+\frac{q^{\frac{k+1}{2}}+2q^{1+\frac{k-3}{4}}Re(i(m+ni)^{2+\frac{k-3}{2}})}{q}$ & $(q-1)(q^{k}-1)/4$\\
  $q^{k-1}(q-1)-1-\frac{q^{\frac{k+1}{2}}+2q^{1+\frac{k-3}{4}}Re(-(m+ni)^{2+\frac{k-3}{2}})}{q}$ & $(q-1)(q^{k}-1)/4$\\
  $q^{k-1}(q-1)-1+\frac{q^{\frac{k+1}{2}}+2q^{1+\frac{k-3}{4}}Re(-i(m+ni)^{2+\frac{k-3}{2}})}{q}$ & $(q-1)(q^{k}-1)/4$\\
  $q^{k-1}(q-1)$ & $q^{k}-1$\\
  $q^{k}-1$ & $q-1$\\
  \hline
\end{tabular}}
\end{tabular}}
\]

\end{thm}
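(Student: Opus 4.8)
The plan is to deduce Theorem 4.21 directly from Theorem 3.4 and Lemma 4.20, which together already contain every piece of data required; no new computation of Gauss sums is needed. First I would recall the compact weight expression obtained inside the proof of Theorem 3.4,
$$w_{H}(c(a,b))=\frac{(q^{k}-1)(q-1)}{q}-\frac{1}{q}\,T_{(e_{1},e_{2})}(a,b),$$
together with the three boundary evaluations established there: the pair $a=b=0$ yields weight $0$ with frequency $1$; the pairs with $a\neq0,\,b=0$ yield weight $q^{k}-1$, occurring $q-1$ times as $a$ ranges over $\Bbb F_{q}^{*}$; and the pairs with $a=0,\,b\neq0$ yield weight $q^{k-1}(q-1)$, occurring $q^{k}-1$ times as $b$ ranges over $\Bbb F_{q^{k}}^{*}$. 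Because $d=4$ forces $q$ and $k$ both odd, the factor $(-1)^{k-1}$ equals $1$, so no extra sign enters. These three evaluations supply the zero row together with the last two rows of Tables 15 and 16.

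For the generic pairs $a\neq0,\,b\neq0$ I would rewrite the weight formula in the cleaned-up form
$$w_{H}(c(a,b))=q^{k-1}(q-1)-1-\frac{T_{(e_{1},e_{2})}(a,b)-1}{q},$$
which is legitimate because $\frac{(q^{k}-1)(q-1)-1}{q}=q^{k-1}(q-1)-1$. Lemma 4.20 lists the four values of $T_{(e_{1},e_{2})}(a,b)$, indexed by the cyclotomic coset $C_{j}^{(4)}$ containing $b^{(q^{k}-1)/(q-1)}a^{-k}$, each occurring $\frac{(q-1)(q^{k}-1)}{4}$ times; substituting each value of $T_{(e_{1},e_{2})}(a,b)-1$ and dividing by $q$ produces exactly the four nontrivial weights tabulated. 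The case split $k\equiv1\pmod4$ versus $k\equiv3\pmod4$ in Lemma 4.20 is what separates Table 15 from Table 16; the only difference is the exponent carried by $(m+ni)$, while the identity $(\sqrt{(p^{*})^{e}})^{k+1}=q^{(k+1)/2}$ (already used in Lemma 4.20) turns the quadratic Gauss-sum contribution $G(\eta,\chi)^{k+1}$ into the displayed power of $q$.

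Finally I would verify the bookkeeping, which is the only place an error could hide. Since the code has dimension $k+1$, the $\Bbb F_{q}$-linear map $(a,b)\mapsto c(a,b)$ is a bijection onto $\mathcal{C}$, so the weight frequencies coincide with counts of pairs $(a,b)$, and
$$1+(q-1)+(q^{k}-1)+4\cdot\frac{(q-1)(q^{k}-1)}{4}=\bigl(1+(q-1)\bigr)\bigl(1+(q^{k}-1)\bigr)=q^{k+1}$$
exhausts all codewords, confirming the tables are complete. The task is essentially clerical, and the main (modest) obstacle is to track the four cyclotomic-coset cases of Lemma 4.20 through the division by $q$ without sign or index slips; one should also note that for particular $q$ and $k$ two of the four generic weights, or a generic weight and one of $q^{k-1}(q-1)$, $q^{k}-1$, may coincide, so that the code has at most six nonzero weights rather than exactly six in every case.
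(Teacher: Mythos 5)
Your proposal is correct and is exactly the paper's route: the paper proves Theorem 4.21 simply by combining the weight formula of Theorem 3.4 with the value distribution of $T_{(e_{1},e_{2})}(a,b)$ in Lemma 4.20, which is what you carry out (your algebraic identity $w_{H}(c(a,b))=q^{k-1}(q-1)-1-\frac{1}{q}\bigl(T_{(e_{1},e_{2})}(a,b)-1\bigr)$, the boundary cases, and the count $q^{k+1}$ of codewords all check out). The only cosmetic difference is that after dividing by $q$ the rows coming from the $\pm i$ cosets appear with $Re(i\,\cdot)$ and $Re(-i\,\cdot)$ interchanged relative to Lemma 4.20, which is harmless since those rows share the same frequency.
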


By Theorem 4.21, we can explicitly obtain the weight distribution of the cyclic code for a certain $k$. For instance,
for $k=3,5$, the weight distributions are given as follows.

\begin{cor}
Let the notations be the same as that in Theorem 4.21. Then the weight distributions of $\mathcal{C}_{((\frac{q^{k}-1}{q-1})e_{1},e_{2})}$ are given in Table 17 for $k=3$ and Table 18 for $k=5$, respectively.

\[ \small{\begin{tabular} {c} Table 17. Weight distribution of the code in Corollary 4.22 if $k=3$\\
{\begin{tabular}{cc}
  \hline
 weight & Frequency\\
  \hline
  $0$ &  1\\
  $q^{2}(q-1)-1-(q+2(m^{2}-n^{2}))$ & $(q-1)(q^{3}-1)/4$\\
  $q^{2}(q-1)-1+(q-4mn)$ & $(q-1)(q^{3}-1)/4$\\
  $q^{2}(q-1)-1-(q+2(n^{2}-m^{2}))$ & $(q-1)(q^{3}-1)/4$\\
  $q^{2}(q-1)-1+(q+4mn)$ & $(q-1)(q^{3}-1)/4$\\
  $q^{2}(q-1)$ & $q^{3}-1$\\
  $q^{3}-1$ & $q-1$\\
  \hline
\end{tabular}}
\end{tabular}}
\]

\[ \small{\begin{tabular} {c} Table 18. Weight distribution of the code in Corollary 4.22 if $k=5$\\
{\begin{tabular}{cc}
  \hline
 weight & Frequency\\
  \hline
  $0$ &  1\\
  $q^{4}(q-1)-1-(q^{2}+2q(m^{2}-n^{2}))$ & $(q-1)(q^{5}-1)/4$\\
  $q^{4}(q-1)-1+(q^{2}-4qmn)$ & $(q-1)(q^{5}-1)/4$\\
  $q^{4}(q-1)-1-(q^{2}+2q(n^{2}-m^{2}))$ & $(q-1)(q^{5}-1)/4$\\
  $q^{4}(q-1)-1+(q^{2}+4qmn)$ & $(q-1)(q^{5}-1)/4$\\
  $q^{4}(q-1)$ & $q^{5}-1$\\
  $q^{5}-1$ & $q-1$\\
  \hline
\end{tabular}}
\end{tabular}}
\]
\end{cor}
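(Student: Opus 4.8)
The plan is to obtain both weight distributions purely by specializing the general Tables 15 and 16 of Theorem 4.21 to $k=3$ and $k=5$; no fresh exponential-sum evaluation is required. Indeed, Theorem 4.21 (through Lemma 4.20) already records the complete value distribution of $T_{(e_1,e_2)}(a,b)$ together with the frequencies $(q-1)(q^k-1)/4$ attached to the four cyclotomic cosets $C_j^{(4)}$, while the degenerate cases $a=0,b\neq0$ and $a\neq0,b=0$ (contributing the weights $q^{k-1}(q-1)$ and $q^k-1$ with frequencies $q^k-1$ and $q-1$) are supplied by Theorem 3.4. Consequently the entire content of the corollary collapses to evaluating the four complex quantities $Re(i^{\,j}(m+ni)^{s})$ for $j=0,1,2,3$ and the two relevant exponents $s$, and then simplifying.

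First I would select the correct table for each residue class. Since $3\equiv3\pmod4$, the case $k=3$ is governed by Table 16, in which the exponent appearing in the real parts is $2+\frac{k-3}{2}$, equal to $2$ at $k=3$; and since $5\equiv1\pmod4$, the case $k=5$ is governed by Table 15, in which the exponent is $\frac{k-1}{2}$, again equal to $2$ at $k=5$. Thus in both cases the only complex power I must expand is the plain square
$$(m+ni)^2=(m^2-n^2)+2mn\,i.$$
Reading off the four real parts from this expansion gives
\begin{align*}
Re\big((m+ni)^2\big)&=m^2-n^2, & Re\big(i(m+ni)^2\big)&=-2mn,\\
Re\big(-(m+ni)^2\big)&=n^2-m^2, & Re\big(-i(m+ni)^2\big)&=2mn.
\end{align*}

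Next I would substitute these into the four weight expressions. For $k=3$ one has $q^{(k+1)/2}=q^2$ and $2q^{1+(k-3)/4}=2q$, so dividing Table 16 through by $q$ produces the rows $q^2(q-1)-1-(q+2(m^2-n^2))$, $q^2(q-1)-1+(q-4mn)$, $q^2(q-1)-1-(q+2(n^2-m^2))$, and $q^2(q-1)-1+(q+4mn)$, which are exactly the entries of Table 17. For $k=5$ one has $q^{(k+1)/2}=q^3$ and $2q^{1+(k-1)/4}=2q^2$, and the same substitution and division by $q$ yield Table 18. The frequencies are inherited verbatim from Theorem 4.21.

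There is no genuine obstacle; the computation is routine once the right table is chosen. The only points demanding care are bookkeeping rather than mathematics: matching each coset $C_j^{(4)}$ (equivalently each factor $i^{\,j}$) to its correct row, verifying that the exponent truly reduces to $2$ so that the real parts are the simple bilinear forms above, and confirming that the sign pattern $+,-,+,-$ on the $q^{(k+1)/2}$ terms—which stems from the quadratic contribution $\eta(b^{(q^{k}-1)/(q-1)}a^{-k})G(\eta,\chi)^{k+1}$ in Lemma 4.20 and is independent of $\varphi(-1)$—carries over unchanged. I would double-check precisely this sign pattern, as it is the one place where a transcription slip could occur.
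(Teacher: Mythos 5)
Your proposal is correct and follows essentially the same route as the paper, which obtains Corollary 4.22 purely by specializing Tables 15 and 16 of Theorem 4.21 to $k=5$ and $k=3$ respectively, where in both cases the exponent reduces to $2$ and $(m+ni)^2=(m^2-n^2)+2mn\,i$ supplies the four real parts $m^2-n^2$, $-2mn$, $n^2-m^2$, $2mn$. Your bookkeeping of the signs on the $q^{(k+1)/2}$ terms and the inherited frequencies agrees with the paper's tables (rows attached to $\pm i$ may be interchanged, but since all four frequencies equal $(q-1)(q^k-1)/4$ the resulting weight distribution is identical).
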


Checking the weight distributions in Corollary 4.22, we can make $\mathcal{C}_{((\frac{q^{k}-1}{q-1})e_{1},e_{2})}$ a cyclic code with four weights for special $q$.

\begin{cor}
Let $q=m^{2}+n^{2}$ with $n=0$ and odd $m$. Let other notations be the same as that in Theorem 4.21. Then the weight distributions of $\mathcal{C}_{((\frac{q^{k}-1}{q-1})e_{1},e_{2})}$ are given in Table 19 for $k=3$ and Table 20 for $k=5$, respectively.

\[ \small{\begin{tabular} {c} Table 19. Weight distribution of the code in Corollary 4.23 if $k=3$ and $n=0$\\
{\begin{tabular}{cc}
  \hline
 weight & Frequency\\
  \hline
  $0$ &  1\\
  $q^{2}(q-1)-1-3q$ & $(q-1)(q^{3}-1)/4$\\
  $q^{2}(q-1)-1+q$ & $3(q-1)(q^{3}-1)/4$\\
  $q^{2}(q-1)$ & $q^{3}-1$\\
  $q^{3}-1$ & $q-1$\\
  \hline
\end{tabular}}
\end{tabular}}
\]

\[ \small{\begin{tabular} {c} Table 20. Weight distribution of the code in Corollary 4.23 if $k=5$ and $n=0$ \\
{\begin{tabular}{cc}
  \hline
 weight & Frequency\\
  \hline
  $0$ &  1\\
  $q^{4}(q-1)-1-3q^{2}$ & $(q-1)(q^{5}-1)/4$\\
  $q^{4}(q-1)-1+q^{2}$ & $3(q-1)(q^{5}-1)/4$\\
  $q^{4}(q-1)$ & $q^{5}-1$\\
  $q^{5}-1$ & $q-1$\\
  \hline
\end{tabular}}
\end{tabular}}
\]

\end{cor}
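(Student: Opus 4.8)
The plan is to obtain Corollary 4.23 as a direct specialization of Corollary 4.22, taking the degenerate decomposition $q=m^{2}+n^{2}$ with $n=0$ and odd $m$, so that $q=m^{2}$. Under this choice every term carrying a factor $mn$ vanishes, while $m^{2}-n^{2}$ becomes $q$ and $n^{2}-m^{2}$ becomes $-q$; these two reductions are exactly what force the four intermediate weights in Tables 17 and 18 to coalesce, turning the general distributions into four-weight ones.

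First I would treat $k=3$. Substituting $n=0$ and $m^{2}=q$ into the four middle rows of Table 17, the first weight $q^{2}(q-1)-1-(q+2(m^{2}-n^{2}))$ becomes $q^{2}(q-1)-1-3q$, while the three weights $q^{2}(q-1)-1+(q-4mn)$, $q^{2}(q-1)-1-(q+2(n^{2}-m^{2}))$ and $q^{2}(q-1)-1+(q+4mn)$ all reduce to the single value $q^{2}(q-1)-1+q$. Adding their three equal frequencies $(q-1)(q^{3}-1)/4$ gives $3(q-1)(q^{3}-1)/4$, which is precisely the entry recorded in Table 19. The analogous substitution into Table 18 for $k=5$ sends the first middle weight to $q^{4}(q-1)-1-3q^{2}$ and the remaining three to the common value $q^{4}(q-1)-1+q^{2}$, whose frequencies again sum to $3(q-1)(q^{5}-1)/4$, reproducing Table 20. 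In both cases the two boundary weights $q^{k-1}(q-1)$ and $q^{k}-1$, together with their frequencies $q^{k}-1$ and $q-1$, are unaffected by the substitution and carry over verbatim.

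Finally I would confirm that the codes genuinely have four nonzero weights and not fewer, by checking that the surviving weights are pairwise distinct for $q>1$. Their differences are nonzero expressions such as $-4q$ (between the two collapsed families), $q-1$ and $-3q-1$ (against the weight $q^{k-1}(q-1)$), and $1-q^{2}$ (between $q^{k-1}(q-1)$ and $q^{k}-1$), none of which vanish for $q>1$. The main point is that there is no substantive obstacle here: the argument is a routine specialization of Corollary 4.22, and the only care needed lies in verifying that exactly three of its four intermediate weights coincide when $n=0$, so that the distribution degenerates precisely to the four-weight tables claimed.
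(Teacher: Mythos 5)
Your proposal is correct and matches the paper's own (implicit) argument: the paper derives Corollary 4.23 exactly by substituting $n=0$, $m^{2}=q$ into the tables of Corollary 4.22, whereupon the three weights involving $\pm 4mn$ and $n^{2}-m^{2}$ collapse to $q^{k-1}(q-1)-1+q^{(k-1)/2}$ and their frequencies add. Your additional check that the four surviving nonzero weights are pairwise distinct is a harmless (and slightly more careful) bonus beyond what the paper records.
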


\begin{exa}
Let $q=9$, $e_{1}=3,e_{2}=5$, $k=3$, by a Magma experiment, we obtain that $\mathcal{C}_{((\frac{q^{k}-1}{q-1})e_{1},e_{2})}$ in Corollary 4.23 is a $[728,4,620]$ four-weight code with weight enumerator
$$1+1456z^{620}+728z^{648}+4368z^{656}+8z^{728}.$$
This coincides with the result given in Corollary 4.23.
\end{exa}

\begin{exa}
Let $q=5$, $e_{1}=e_{2}=1$, $k=5$, by a Magma experiment, we obtain that $\mathcal{C}_{((\frac{q^{k}-1}{q-1})e_{1},e_{2})}$ in Corollary 4.22 is a $[3124,6,2444]$ six-weight code with weight enumerator
$$1+3124z^{2444}+3124z^{2484}+3124z^{2500}+3124z^{2504}+3124z^{2564}+4z^{3124}.$$
This coincides with the result given in Corollary 4.22.
\end{exa}
\section{Concluding remarks}
In this paper, we have presented a general construction of cyclic codes which contains some known codes given by \cite{LYL, V1}. The Hamming weights of this class of cyclic codes are represented by Gauss sums. And for $d=1,2,3,4$, we explicitly determine the weight distributions which indicate that the codes have only a few weights. In particular, for $d=1$, this class of cyclic codes is optimal achieving the Gresmer bound. In \cite{V1}, the author proposed an open problem to give the weight distribution when $k=2,d>1$. And we solve this problem for $d=2,3,4$ with any $k\geq 2$. For further research, we believe that it could be an interesting work to determine the weight distributions of the codes for $d\geq 5$ with any $k\geq2$.

\end{document}